\documentclass[11pt]{article}
\pdfoutput=1
\usepackage[utf8]{inputenc}
\usepackage[
 a4paper,
 total={160mm,257mm},
 left=25mm,
 top=20mm]{geometry}
\usepackage{hyperref}
\usepackage{amsfonts}
\usepackage{amsmath}
\usepackage{amssymb}
\usepackage{mathrsfs}  
\usepackage{color}
\usepackage{authblk}
\usepackage{physics}
\usepackage{multicol}
\usepackage{tikz}
\usepackage{setspace}
\usepackage{amsthm}
\usepackage{bbm}
\usepackage{comment}
\usepackage[toc,page]{appendix}
\theoremstyle{definition}
\newtheorem{theorem}{Theorem}[section]
\newtheorem{corollary}[theorem]{Corollary}
\newtheorem{definition}[theorem]{Definition}
\newtheorem{lemma}[theorem]{Lemma}
\newtheorem{proposition}[theorem]{Proposition}
\theoremstyle{remark}

\newtheorem{remark}[theorem]{Remark}


\newcommand{\cvev}[1]{(\, #1 \,)}
\newcommand{\rvev}[1]{\langle \, #1 \, \rangle}
\newcommand{\crvev}[1]{(\, #1 \,\rangle}
\newcommand{\rcvev}[1]{\langle\, #1 \,)}
\newcommand{\ee}{\mathrm{e}}

\newcommand{\id}{\mathbbm{1}}
\newcommand{\rem}[1]{\textbf{\textcolor{blue}{(#1)}}}

\DeclareMathOperator{\diag}{diag}

\parskip=.5em

\numberwithin{equation}{section}

\title{\hspace{-.2em}\textbf{Characteristic Polynomials in Coupled Matrix Models}} 
\author{\textsc{Nicolas Babinet} and \textsc{Taro Kimura}}
\affil{%
Institut de Math\'ematiques de Bourgogne,
Universit\'e Bourgogne Franche-Comt\'e
}
\date{}

\begin{document}

\maketitle

\begin{abstract}
We study correlation functions of the characteristic polynomials in coupled matrix models based on the Schur polynomial expansion, which manifests their determinantal structure.
\end{abstract}

\tableofcontents
\vspace{1.5em}
\hrule

\section{Introduction and summary}

\subsection{Introduction}

Random Matrix Theory (RMT) has been playing an important role over the decades in the both of physics and mathematics communities~\cite{Mehta:2004RMT,Forrester:2010,Akemann:2011RMT,Eynard:2015aea}.
Applying the analogy with Quantum Field Theory (QFT), the asymptotic behavior appearing in the large size limit (large $N$ limit) is interpreted as a classical behavior as the parameter $1/N$ plays a role of the Planck constant.
From this point of view, it is an important task to explore the finite $N$ result to understand the quantum $1/N$ correction and also the non-perturbative effect beyond the perturbative analysis.
\noindent
The purpose of this paper is to show the finite $N$ exact result of a class of the correlation functions in the generalized two-matrix model, what we simply call the coupled matrix model, which contains various models coupled in the chain.
See, e.g,~\cite{Itzykson:1979fi,Eynard:1998JPA,Bertola:2001hq,Bertola:2001br,Bertola:2002xf,Bertola:2003iw,Bertola:2009CMP} and also~\cite{Eynard:2005ab,Bertola:2011,Orantin:2011} for the development in this direction.
We will show that this model can be analyzed using its determinantal structure, which is a key property to obtain the finite $N$ exact result.
In this paper, we in particular consider the correlation function of the characteristic polynomials in the coupled matrix model.
It has been known in the context of RMT that the characteristic polynomial plays a central role in the associated differential equation system through the Riemann--Hilbert problem and the notion of quantum curve.
In addition, the characteristic polynomial is essentially related to various other important observables in RMT, e.g., the resolvent, the eigenvalue density function, etc.
See, e.g.,~\cite{Morozov:1994hh,Brezin:2000CMP,Fyodorov:2002jw,Strahov:2002zu,Akemann:2002vy,Baik:2003JMP,Borodin:2006CPAM} and also~\cite{Brezin:2011} for earlier results in this direction.

\subsection{Summary of the results}

We state the summary of this paper.
In Section~\ref{sec:model}, we introduce the generalized coupled matrix model defined as the following formal eigenvalue integral,
\begin{align}
Z_N = \frac{1}{N!^2} \int \prod_{k = L, R} \dd{X}_k \ee^{- \tr V_k(X_k)} \prod_{i<j}^N (x_{k,i} - x_{k,j}) \det_{1 \le i, j \le N} \omega(x_{L,i},x_{R,j}) 
\end{align}
for arbitrary potential functions $V_k(x)$ and a two-variable function $\omega(x,y)$.
See Definition~\ref{def:Z_fn} for details.
We then show that this eigenvalue integral is reduced to the determinant of the norm matrix of the corresponding two-variable integral.
We also show that biorthogonal polynomials, which diagonalize the norm matrix, simplify the formulas.
We mention in Section~\ref{subsec:pe} that the analysis shown there is straightforwardly applied to the coupled matrix generalization of the polynomial ensemble~\cite{Kuijlaars:2014RMTA} defined for a set of arbitrary functions, containing various known models, e.g., the external source model~\cite{Brezin:2016eax}.
See also~\cite{Borodin:1998hxr}.
In Section~\ref{sec:ch_poly}, we study the correlation function for the coupled matrix model.
In Section~\ref{sec:Schur_av}, we show the Schur polynomial average, which will be a building block of the characteristic polynomials discussed throughout the paper.
In Sections~\ref{sec:ch_poly_av} and \ref{sec:ch_poly_inv_av}, we explore the correlation function of the characteristic polynomial and its inverse, and show that they are concisely expressed as a determinant of the biorthogonal polynomial and its dual.
We remark that these results are natural generalization of the earlier results on the one-matrix model case.
In Section~\ref{sec:pair_corr}, we consider the pair correlation function, which involves both the characteristic polynomials coupled with $X_L$ and $X_R$.
In this case, the correlation functions are again expressed as a determinant, while the corresponding matrix element is written using the Christoffel--Darboux (CD) kernel and its dual.

\subsection*{Acknowledgments}

We would like to thank Bertrand Eynard for useful conversation.
This work was supported in part by ``Investissements d'Avenir'' program, Project ISITE-BFC (No.~ANR-15-IDEX-0003), EIPHI Graduate School (No.~ANR-17-EURE-0002), and Bourgogne-Franche-Comté region.

\section{Coupled matrix model}\label{sec:model}

In this paper, we explore the coupled matrix model defined as follows.
\begin{definition}[Partition function]\label{def:Z_fn}
Let $V_k(x)$ $(k = L,R)$ be a polynomial function and $\omega(x_L,x_R)$ be a two-variable function.
Let $(X_k)_{k = L,R} = (x_{k,i})_{k = L, R, i = 1,\ldots, N}$ be a set of formal eigenvalues.
Then, we define the partition function of the coupled matrix model,
\begin{align}
Z_N = \frac{1}{N!^2} \int \prod_{k = L, R} \dd{X}_k \ee^{- \tr V_k(X_k)} \Delta_N(X_L) \det_{1 \le i, j \le N} \omega(x_{L,i},x_{R,j}) \Delta_N(X_R)
\, ,
\label{eq:Z_fn}
\end{align}
where we denote the Vandermonde determinant by
\begin{align}
    \Delta_N(X) = \prod_{i<j}^N (x_i - x_j) 
    \, .
\end{align}
\end{definition}
\begin{remark}
We formally consider the eigenvalues $(x_{k,i})$ as complex variables, and thus their integration contour is taken to provide a converging integral, which is not unique in general.
In this paper, we do not discuss the contour dependence on the partition function, so that we always consider the eigenvalue integral as a formal integral.
\end{remark}
\noindent
Throughout the paper, we frequently use the following identity.
\begin{lemma}[Andréief--Heine identity]\label{lemma:AH_id}
Let $(f_i(x))_{i = 1,\ldots,N}$ and $(g_i(x))_{i = 1,\ldots,}$ be the sequences of integrable functions on the domain $D$.
Denoting $\dd{X} = \dd{x}_1 \cdots \dd{x}_N$, the following identity holds,
\begin{align}
    \frac{1}{N!} \int_{D^N} \dd{X} \det_{1 \le i, j \le N} f_i(x_j) \det_{1 \le i, j \le N} g_i(x_j) = \det_{1 \le i, j \le N} \qty( \int_D \dd{x} f_i(x) g_j(x) )
    \, ,
\end{align}
which is called the Andréief--Heine (AH) identity.
\end{lemma}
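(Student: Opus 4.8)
The plan is to prove the identity by expanding both determinants through the Leibniz permutation formula and exploiting the invariance of the integration region $D^N$ under coordinate permutations. First I would write
$\det_{1 \le i,j \le N} f_i(x_j) = \sum_{\sigma \in S_N} \operatorname{sgn}(\sigma) \prod_{i=1}^N f_i(x_{\sigma(i)})$
and pull this finite sum outside the integral; this is legitimate because each $f_i$ and $g_i$ is integrable on $D$, so every summand is integrable on $D^N$. For each fixed $\sigma \in S_N$, I would relabel the integration variables via $y_i = x_{\sigma(i)}$. Since $\dd{X}$ is a product measure and $D^N$ is symmetric, this substitution has Jacobian of modulus one and leaves the region unchanged, turning $\prod_i f_i(x_{\sigma(i)})$ into $\prod_i f_i(y_i)$ and $\det_{i,j} g_i(x_j)$ into $\det_{i,j} g_i(y_{\sigma^{-1}(j)}) = \operatorname{sgn}(\sigma) \det_{i,j} g_i(y_j)$ by the column-permutation rule for determinants. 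Each of the $N!$ terms therefore contributes the same quantity $\operatorname{sgn}(\sigma)^2 \int_{D^N} \dd{Y} \prod_i f_i(y_i) \det_{i,j} g_i(y_j)$, and the prefactor $1/N!$ exactly cancels $|S_N|$, leaving $\int_{D^N} \dd{Y} \prod_i f_i(y_i) \det_{1 \le i,j \le N} g_i(y_j)$.

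Next I would expand the surviving determinant, $\det_{i,j} g_i(y_j) = \sum_{\tau \in S_N} \operatorname{sgn}(\tau) \prod_j g_{\tau(j)}(y_j)$, interchange the sum with the integral once more, and note that for each $\tau$ the integrand $\prod_j f_j(y_j) g_{\tau(j)}(y_j)$ is a product of single-variable functions, so Fubini's theorem gives $\int_{D^N} \dd{Y} \prod_j f_j(y_j) g_{\tau(j)}(y_j) = \prod_j \int_D \dd{y}\, f_j(y) g_{\tau(j)}(y)$. Summing over $\tau$ against the signs then reproduces precisely the Leibniz expansion of $\det_{1 \le i,j \le N}\qty( \int_D \dd{x}\, f_i(x) g_j(x) )$, which is the right-hand side.

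The combinatorics here is entirely routine; the only points demanding a word of justification are analytic, namely the two interchanges of a finite permutation sum with the integral and the Fubini factorization, all of which follow at once from the standing integrability hypothesis on the $f_i$ and $g_i$, together with the bookkeeping of the sign emitted by the column permutation in the change of variables. An alternative would be induction on $N$ using Laplace expansion of one determinant along its first row and linearity of the integral, but I expect the permutation-expansion argument above to be cleaner and would present that one.
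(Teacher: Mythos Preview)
Your argument is correct and is in fact the standard proof of the Andr\'eief--Heine identity: expand both determinants via Leibniz, symmetrize using the $S_N$-invariance of the product measure to collapse one permutation sum against the $1/N!$, then factorize the remaining integrand with Fubini to recover the Leibniz expansion of the matrix of pairings on the right. The bookkeeping of the sign under the column relabelling and the analytic justifications (finite sums, integrability, Fubini) are handled appropriately.

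By contrast, the paper does not supply a proof of this lemma at all: it is stated as a known identity and then invoked repeatedly throughout (e.g.\ in the matrix-chain reduction and in the determinantal formulas). So there is no ``paper's own proof'' to compare against; your write-up simply fills in what the authors take for granted. If anything, you could shorten it to the two-line version (expand, symmetrize, factorize), since the lemma is classical and the paper treats it as such.
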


\begin{proposition}[Hermitian matrix chain models]
Let $(M_k)_{k = 1,\ldots,\ell}$ be a set of $\ell$ Hermitian matrices of rank $N$.
The following matrix chain models are reduced to the coupled matrix model of the form of~\eqref{eq:Z_fn}:
\begin{subequations}
\begin{align}
    Z_{\text{pot}} & = \int \prod_{k = 1,\ldots,\ell} \dd{M_k} \ee^{-\tr V_k(M_k)} \prod_{k = 1}^{\ell-1} \ee^{\tr M_k M_{k+1}}
    \, , \\
    Z_{\text{Cauchy}} & = \int \prod_{k = 1,\ldots,\ell} \dd{M_k} \ee^{-\tr V_k(M_k)} \prod_{k = 1}^{\ell - 1} \det( M_k \otimes \id_N + \id_N \otimes M_{k+1})^{-N}
    \, .
\end{align}
\end{subequations}
We call them the potential-interacting matrix chain and the Cauchy-interacting matrix chain, respectively.
\end{proposition}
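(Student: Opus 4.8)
The plan is to reduce each chain to the form~\eqref{eq:Z_fn} in three moves: pass from matrices to eigenvalues by the Weyl integration formula, carry out the angular integrations, and then contract the chain by repeated use of the Andréief--Heine identity of Lemma~\ref{lemma:AH_id}. For each Hermitian matrix I would write $M_k = U_k \Lambda_k U_k^\dagger$ with $\Lambda_k = \diag(\lambda_{k,1}, \ldots, \lambda_{k,N})$ and $U_k \in \mathrm{U}(N)$, so that $\dd{M_k} \propto \Delta_N(\Lambda_k)^2 \, \dd{\Lambda_k} \, \dd{U_k}$ (with $\dd{\Lambda_k} = \prod_i \dd{\lambda_{k,i}}$), while $\ee^{-\tr V_k(M_k)} = \prod_i \ee^{-V_k(\lambda_{k,i})}$ depends only on the eigenvalues.

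For the potential-interacting chain one has $\tr M_k M_{k+1} = \tr(\Lambda_k W_k \Lambda_{k+1} W_k^\dagger)$ with $W_k = U_k^\dagger U_{k+1}$; substituting $(U_1, \ldots, U_\ell) \mapsto (U_1, W_1, \ldots, W_{\ell-1})$ leaves the Haar measure invariant, the $U_1$-integral becomes a trivial volume factor, and each $W_k$ enters a single interaction term, so the angular integral factorizes into $\ell - 1$ Harish-Chandra--Itzykson--Zuber integrals: for diagonal $A = \diag(a)$, $B = \diag(b)$,
\begin{align}
\int_{\mathrm{U}(N)} \dd{W} \, \ee^{\tr(A W B W^\dagger)} = c_N \, \frac{\det_{1 \le i, j \le N} \ee^{a_i b_j}}{\Delta_N(A) \, \Delta_N(B)} \, ,
\end{align}
with $c_N$ depending on $N$ only. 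For the Cauchy-interacting chain the interaction $\det(M_k \otimes \id_N + \id_N \otimes M_{k+1})^{-N} = \prod_{i,j}(\lambda_{k,i} + \lambda_{k+1,j})^{-N}$ already depends on eigenvalues alone, so the angular integrals are trivial and the Cauchy determinant identity takes over the role of the Harish-Chandra--Itzykson--Zuber formula, trading the interaction on each link against the adjacent Haar--Vandermonde factors for one determinant per link. In either case a bookkeeping of the Vandermonde factors shows that each interior vector $\Lambda_m$ ($2 \le m \le \ell - 1$) meets two links whose denominators absorb its $\Delta_N(\Lambda_m)^2$ completely, whereas $\Lambda_1$ and $\Lambda_\ell$ meet one link each and retain a single factor, leaving
\begin{align}
Z \propto \int \prod_{k=1}^{\ell} \dd{\Lambda_k} \; \ee^{-\sum_k \tr V_k(\Lambda_k)} \; \Delta_N(\Lambda_1) \, \Delta_N(\Lambda_\ell) \prod_{k=1}^{\ell-1} \det_{1 \le i, j \le N} \kappa(\lambda_{k,i}, \lambda_{k+1,j}) \, ,
\end{align}
with link kernel $\kappa(x,y) = \ee^{xy}$ in the first model and $\kappa(x,y) = (x+y)^{-1}$ in the second.

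It then remains to integrate out $\Lambda_2, \ldots, \Lambda_{\ell-1}$ one at a time. At each stage the integrand carries exactly two determinants in the variables of the vector being eliminated, of the shapes $\det_{1 \le i, j \le N} \phi_i(\lambda_{m,j})$ and $\det_{1 \le i, j \le N} \psi_j(\lambda_{m,i})$; folding $\ee^{-V_m}$ into one of them and applying Lemma~\ref{lemma:AH_id} fuses the pair into a single $N \times N$ determinant with entries $\int \dd{\mu} \, \ee^{-V_m(\mu)} \phi_i(\mu) \psi_j(\mu)$, that is, composes the two adjacent kernels through the weight $\ee^{-V_m}$. Iterating $\ell - 2$ times collapses the chain to $\det_{1 \le i, j \le N} \omega(\lambda_{1,i}, \lambda_{\ell,j})$ with
\begin{align}
\omega(x,y) = \int \prod_{k=2}^{\ell-1} \dd{\mu_k} \, \ee^{-V_k(\mu_k)} \; \kappa(x, \mu_2) \qty( \prod_{k=2}^{\ell-2} \kappa(\mu_k, \mu_{k+1}) ) \kappa(\mu_{\ell-1}, y)
\end{align}
(and $\omega = \kappa$ when $\ell = 2$, for which the previous display already gives the claim). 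Setting $X_L = \Lambda_1$, $X_R = \Lambda_\ell$, $V_L = V_1$, $V_R = V_\ell$ then exhibits $Z$ in the form~\eqref{eq:Z_fn}; the leftover $N$-dependent constant is absorbed into the normalization, and one checks that the $(N!)^{\ell-2}$ produced by the contractions cancels all but two of the factorials coming from diagonalization, reproducing the $1/N!^2$ of~\eqref{eq:Z_fn}.

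I do not expect a genuine obstacle. The two points that need care are the legitimacy of the iterated angular integration, which is settled cleanly by the substitution $W_k = U_k^\dagger U_{k+1}$ after which the $W_k$ decouple, and the bookkeeping of Vandermonde powers, which must leave exactly one factor at each end of the chain so that the endpoint measures match the two Vandermonde determinants in~\eqref{eq:Z_fn}. The eigenvalue integrals are treated as formal integrals in the sense of the Remark following Definition~\ref{def:Z_fn}, so questions of convergence and of the choice of contour are set aside throughout.
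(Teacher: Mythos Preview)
Your proposal is correct and follows essentially the same route as the paper: diagonalize each $M_k$, trade the angular integrals for determinantal link kernels via the HCIZ formula (respectively the Cauchy determinant identity), and then collapse the intermediate eigenvalue integrals using the Andr\'eief--Heine identity to obtain~\eqref{eq:Z_fn} with the composed kernel $\omega$. Your treatment is in fact slightly more explicit than the paper's on two points the paper leaves implicit: the change of variables $W_k = U_k^\dagger U_{k+1}$ that decouples the $\ell-1$ unitary integrals, and the factorial/Vandermonde bookkeeping ensuring exactly $1/N!^2$ and one Vandermonde at each endpoint survive.
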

\begin{proof}
Diagonalizing each Hermitian matrix using the unitary transform for $k = 1,\ldots,\ell$,
\begin{align}
    M_k = U_k X_k U_k^{-1}
    \, ,  \qquad
    X_k = \diag(x_{k,1},\ldots,x_{k,N})
    \, , \qquad
    U_k \in \mathrm{U}(N)
    \, ,
\end{align}
the matrix measure is given by
\begin{align}
    \dd{M}_k = \frac{\dd{U_k} \dd{X}_k}{N!(2\pi)^N} \Delta_N(X_k)^2
    \, , \qquad
    \dd{X}_k = \prod_{i=1}^N \dd{x}_{k,i}
    \, ,
\end{align}
where we denote the Haar measure of each unitary matrix by $\dd{U_k}$.
We remark that the factors $(2\pi)^N$ and $N!$ are interpreted as the volumes of the maximal Cartan torus $\mathrm{U}(1)^N \subset \mathrm{U}(N)$, and the symmetric group $\mathfrak{S}_N$, which is the Weyl group of the unitary group $\mathrm{U}(N)$.

\noindent
For the potential-interacting chain, we may use the Harich-Chandra--Itzykson--Zuber formula~\cite{Itzykson:1979fi},
\begin{align}
    \int_{\mathrm{U}(N)} \dd{U} \ee^{\tr U X U^{-1} Y} = \frac{c_N}{\Delta_N(X) \Delta_N(Y)} \det_{1 \le i, j \le N} \ee^{x_i y_j}
\end{align}
where the constant factor $c_N = \Gamma_2(N+1) = \prod_{j=0}^{N-1} j!$ is chosen to be consistent with the normalization of the group integral, $\int_{\mathrm{U}(N)} \dd{U} = 1$.
Then, we obtain
\begin{align}
    Z_{\text{pot}} & = \frac{c_N^{\ell-1}}{N!^\ell} \int \prod_{k = 1,\ldots,\ell} \frac{\dd{X}_k}{(2\pi)^N} \ee^{-\tr V_k(X_k)} \Delta_N(X_1) \qty( \prod_{k = 1}^{\ell - 1} \det_{1 \le i, j \le N} \ee^{x_{k,i} x_{k+1,j}} ) \Delta_N(X_\ell)
    \nonumber \\
    & = \frac{c_N^{\ell-1}}{N!^2} \int \prod_{k = 1,\ell} \frac{\dd{X}_k}{(2\pi)^N} \ee^{-\tr V_k(X_k)}
    \Delta_N(X_1) \det_{1 \le i, j \le N} \qty( \int \prod_{k = 2,\ldots,\ell-1} \frac{\dd{x}_k}{2\pi} \ee^{-V_k(x_k)}
    \prod_{k = 1}^{\ell - 1} \ee^{x_k x_{k+1}} ) \Delta_N(X_\ell) 
    \, ,
\end{align}
where we apply the AH identity (Lemma~\ref{lemma:AH_id}) for $(X_k)_{k = 2,\ldots,\ell-1}$.
Identifying $(X_1,X_\ell) = (X_L,X_R)$ and 
\begin{align}
    \omega(x_1,x_\ell) = \int \prod_{k = 2,\ldots,\ell-1} \frac{\dd{x}_k}{2\pi} \ee^{-V_k(x_k)}
    \prod_{k = 1}^{\ell - 1} \ee^{x_k x_{k+1}}
    \, ,
\end{align}
we arrive at the expression~\eqref{eq:Z_fn} up to an overall constant.

\noindent
For the Cauchy-interacting chain, we remark the relation~\cite{Bertola:2009CMP}
\begin{align}
    \det(M_k \otimes \id_N + \id_N \otimes M_{k+1})^{-N}
    & \xrightarrow{\text{diagonalization}} \prod_{1 \le i,j \le N} \frac{1}{x_{k,i} + x_{k+1,j}} 
    \nonumber \\
    & = \frac{1}{\Delta_N(X_k) \Delta_N(X_{k+1})} \det_{1 \le i,j \le N} \qty( \frac{1}{x_{k,i} + x_{k+1,j}} )
    \, .
\end{align}
Therefore, we may write the Cauchy-interacting chain partition function as
\begin{align}
    Z_{\text{Cauchy}} & = \frac{1}{N!^\ell} \int \prod_{k = 1,\ldots,\ell} \frac{\dd{X}_k}{(2\pi)^N} \ee^{-\tr V_k(X_k)} \Delta_N(X_1) \prod_{k = 1}^{\ell-1} \det_{1 \le i,j \le N} \qty( \frac{1}{x_{k,i} + x_{k+1,j}} ) \Delta_N(X_\ell)
    \, .
\end{align}
Similarly, applying the AH identity for $(X_k)_{k = 2,\ldots,\ell-1}$, and identifying $(X_1,X_\ell) = (X_L,X_R)$ with
\begin{align}
    \omega(x_1,x_\ell) = \int \prod_{k = 2,\ldots,\ell-1} \frac{\dd{x}_k}{2\pi} \ee^{-V_k(x_k)}
    \prod_{k = 1}^{\ell - 1} \frac{1}{x_{k} + x_{k+1}}
    \, ,
\end{align}
we arrive at the expression~\eqref{eq:Z_fn}.
This completes the proof.
\end{proof}

\begin{remark}
We can in general obtain the coupled matrix model~\eqref{eq:Z_fn} from the matrix chain if the nearest-neighbor interaction is given in the determinantal form 
\begin{align}
\frac{1}{\Delta_N(X_k) \Delta_N(X_{k+1})}
\det_{1 \le i, j \le N} I(x_{k,i},x_{k+1,j})
\end{align}
after the diagonalization.
We also remark that the supermatrix model
\begin{align}
    Z_\text{susy} & = \frac{1}{N!^2} \int \dd{X} \dd{Y} \ee^{- \tr V(X) + \tr V(Y)} \Delta_N(X)^2 \Delta_N(Y)^2 \prod_{1 \le i,j \le N} (x_i - y_j)^{-2}
    \nonumber \\
    & = \frac{1}{N!^2} \int \dd{X} \dd{Y} \ee^{- \tr V(X) + \tr V(Y)} \det_{1 \le i, j \le N} \qty( \frac{1}{x_i - y_j} )^2
\end{align}
has a closed form to the partition function~\eqref{eq:Z_fn}, but it does not belong to the coupled matrix model of our current interest.
\end{remark}

\subsection{Determinantal formula}

We show that the partition function~\eqref{eq:Z_fn} is written in a determinantal form.
In order to show this, we introduce the notations.

\begin{definition}
We define the inner product with respect to the potentials $V_{L,R}(x_{L,R})$,
\begin{align}
    \cvev{ f \mid \omega \mid g} =
    \int \prod_{k = L, R} \dd{x}_k \ee^{-V_k(x_k)} f(x_L) \omega(x_L,x_R) g(x_R)
    \, .
    \label{eq:cvev}
\end{align}
For a set of arbitrary monic polynomials $(p_i(x), q_i(x))_{i \in \mathbb{Z}_{\ge 0}}$, where $p_i(x) = x^i + \cdots$ and $q_i(x) = x^i + \cdots$, we define the norm matrix,
\begin{align}
    \mathsf{N}_{i,j} = \cvev{p_{i} \mid \omega \mid q_{j}}
    \, .
\end{align}
\end{definition}

\begin{proposition}
The coupled matrix model partition function~\eqref{eq:Z_fn} is given as a rank $N$ determinant of the norm matrix,
\begin{align}
    Z_N = \det_{1 \le i, j \le N} \mathsf{N}_{N-i,N-j}
    \, .
\end{align}
\end{proposition}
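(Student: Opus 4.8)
The plan is to expand both Vandermonde determinants over the monic families $(p_i)$ and $(q_i)$ and then integrate out $X_L$ and $X_R$ successively by means of the Andréief--Heine identity (Lemma~\ref{lemma:AH_id}). First I would use the fact that for any monic polynomials $p_i$ with $\deg p_i = i$,
\[
\Delta_N(X) = \det_{1 \le i, j \le N} x_j^{N-i} = \det_{1 \le i, j \le N} p_{N-i}(x_j)
\, ,
\]
the first equality being a short sign check (reversing the row order of the usual Vandermonde matrix produces $(-1)^{\binom{N}{2}}$, which is exactly the sign relating $\prod_{i<j}(x_j-x_i)$ to $\Delta_N(X) = \prod_{i<j}(x_i-x_j)$), and the second one following from elementary row operations that leave the determinant unchanged. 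Applying this with $(p_i)$ for $X_L$ and $(q_i)$ for $X_R$ rewrites~\eqref{eq:Z_fn} as
\[
Z_N = \frac{1}{N!^2} \int \prod_{k = L,R} \dd{X}_k \, \ee^{-\tr V_k(X_k)} \, \det_{1 \le i, j \le N} p_{N-i}(x_{L,j}) \, \det_{1 \le i, j \le N} \omega(x_{L,i},x_{R,j}) \, \det_{1 \le i, j \le N} q_{N-i}(x_{R,j})
\, .
\]

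Next I would integrate out $X_L$. Transposing the middle determinant, $\det_{i,j} \omega(x_{L,i},x_{R,j}) = \det_{i,j} \omega(x_{L,j},x_{R,i})$, the $X_L$-integrand is a product of two $N\times N$ determinants in the variables $x_{L,j}$, namely $\det_{i,j} p_{N-i}(x_{L,j})$ and $\det_{i,j} \omega(x_{L,j},x_{R,i})$, together with the weight $\prod_j \ee^{-V_L(x_{L,j})}$, which I absorb into the second determinant. Lemma~\ref{lemma:AH_id} then gives
\[
\frac{1}{N!}\int \dd{X}_L \, \ee^{-\tr V_L(X_L)} \det_{i,j} p_{N-i}(x_{L,j}) \, \det_{i,j}\omega(x_{L,j},x_{R,i}) = \det_{1 \le i,j \le N}\left( \int \dd{x}\, \ee^{-V_L(x)} p_{N-i}(x)\, \omega(x,x_{R,j}) \right)
\, .
\]
The variable index of this new determinant is $j$, so $Z_N$ is now an $X_R$-integral of the product of $\det_{i,j}\!\big(\int \dd{x}\, \ee^{-V_L(x)} p_{N-i}(x)\,\omega(x,x_{R,j})\big)$ and $\det_{i,j} q_{N-i}(x_{R,j})$ against $\prod_j \ee^{-V_R(x_{R,j})}$. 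A second application of Lemma~\ref{lemma:AH_id} yields
\[
Z_N = \det_{1 \le i,j \le N}\left( \int \dd{x}\dd{y}\, \ee^{-V_L(x)-V_R(y)}\, p_{N-i}(x)\, \omega(x,y)\, q_{N-j}(y) \right) = \det_{1 \le i,j \le N} \cvev{p_{N-i} \mid \omega \mid q_{N-j}} = \det_{1 \le i, j \le N} \mathsf{N}_{N-i,N-j}
\, ,
\]
which is the claim.

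The argument is essentially index bookkeeping, so there is no deep obstacle; the one point that genuinely requires care is the double role of the determinant $\det_{i,j}\omega(x_{L,i},x_{R,j})$. In the first use of the Andréief--Heine identity it must be read as a determinant of the functions $x \mapsto \omega(x,x_{R,i})$ of the integration variable $x_L$, whereas the $N\times N$ array produced by that integration has to be re-read as a determinant of functions of the remaining variable $x_{R,j}$ before the identity can be applied a second time. One should also keep track of the Vandermonde signs: with the degree-decreasing labelling above they reduce to nothing, so the final formula carries coefficient $+1$ as stated.
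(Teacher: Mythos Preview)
Your proof is correct and follows essentially the same route as the paper: rewrite each Vandermonde determinant as a determinant of monic polynomials and then apply the Andr\'eief--Heine identity to integrate out $X_L$ and $X_R$. The paper compresses the two applications of Lemma~\ref{lemma:AH_id} into a single line (``we apply the AH identity for $X_{L,R}$''), whereas you spell out the intermediate step and the transposition of the coupling determinant, but the argument is the same.
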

\begin{proof}
Noticing that the Vandermonde determinant is written as a determinant of arbitrary monic polynomials,
\begin{align}
 \Delta_N(X_L) = \det_{1 \le i, j \le N} p_{N-j}(x_{L,i})
 \, , \qquad
 \Delta_N(X_R) = \det_{1 \le i, j \le N} q_{N-j}(x_{R,i})
 \, ,
 \label{eq:Vandermonde_pq}
\end{align}
the partition function~\eqref{eq:Z_fn} is evaluated as a rank $N$ determinant,
\begin{align}
Z_N & = \frac{1}{N!^2} \int \prod_{k = L, R} \dd{X}_k \ee^{- \tr V_k(X_k)} \det_{1 \le i, j \le N} p_{N-j}(x_{L,i}) \det_{1 \le i, j \le N} \omega(x_{L,i},x_{R,j}) \det_{1 \le i, j \le N} q_{N-j}(x_{R,i})
    \nonumber \\
    & = \det_{1 \le i, j \le N} \qty[ \int \prod_{k = L, R} \dd{x}_k \ee^{-V_k(x_k)} p_{N-i}(x_L) \omega(x_L,x_R) q_{N-j}(x_R) ]
    \nonumber \\
    & = \det_{1 \le i, j \le N} \mathsf{N}_{N-i,N-j}
    \, ,
\end{align}
where we apply the AH identity for $X_{L,R}$.
This completes the proof.
\end{proof}

\begin{remark}[Biorthogonal polynomial]
Specializing the monic polynomials to the biorthogonal polynomials,
\begin{align}
    \cvev{ P_{i} \mid \omega \mid Q_{j} } = h_{i} \delta_{i,j}
    \, ,
\end{align}
the norm matrix is diagonalized $\mathsf{N}_{i,j} = h_i \delta_{i,j}$, so that the partition function is given by
\begin{align}
    Z_N = \prod_{i=0}^{N-1} h_i
    \, .
\end{align}
\end{remark}

\subsection{Christoffel--Darboux kernel}

\begin{definition}[Christoffel--Darboux kernel]
We define the Christoffel--Darboux (CD) kernel associated with the coupled matrix model,
\begin{align}
    K_N(x_R,x_L) & = \ee^{- V_L(x_L) - V_R(x_R)} \sum_{i,j = 0}^{N-1} q_i(x_R) \left(\mathsf{N}^{-1}\right)_{i,j} p_j(x_L)
    \nonumber \\
    & = \ee^{- V_L(x_L) - V_R(x_R)} \sum_{i = 0}^{N-1} \frac{Q_i(x_R) P_i(x_L)}{h_i}
    = \sum_{i = 0}^{N-1} \psi_i(x_R) \phi_i(x_L)
    \, .
    \label{eq:CD_def}
\end{align}
We denote the inverse of the norm matrix by $\left(\mathsf{N}^{-1}\right)_{i,j}$, and define the biorthonormal functions, that we call the wave functions, by
\begin{align}
    \phi_i(x) = \frac{\ee^{-V_L(x)}}{\sqrt{h_i}} p_i(x)
    \, , \qquad
    \psi_i(x) = \frac{\ee^{-V_R(x)}}{\sqrt{h_i}} q_i(x)
    \, .
    \label{eq:wf_def}
\end{align}
\end{definition}

\begin{proposition}
The probability distribution associated with the partition function~\eqref{eq:Z_fn} is written using the CD kernel,
\begin{align}
    \mathsf{P}_N(X_{L,R}) & = \frac{Z_N^{-1}}{N!^2} \prod_{k = L,R} \ee^{- \tr V_k(X_k)} \Delta_N(X_L) \det_{1 \le i, j \le N} \omega(x_{L,i},x_{R,j}) \Delta_N(X_R)
    \nonumber \\
    & = \frac{1}{N!^2} \det_{1 \le i, j \le N} \omega(x_{L,i},x_{R,j}) \det_{1 \le i, j \le N} K_N(x_{R,i},x_{L,j})    
    \, ,
    \label{eq:P_N}
\end{align}
which obeys the normalization condition
\begin{align}
    \int \prod_{k = L,R} \dd{X}_k \mathsf{P}_N(X_{L,R}) = 1
    \, .
\end{align}
\end{proposition}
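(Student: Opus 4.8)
The plan is to establish the two displayed claims in~\eqref{eq:P_N} separately: first the rewriting of the Vandermonde--times--Vandermonde structure in terms of the CD kernel, and then the normalization to unity.

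First I would prove the kernel identity. Using the representation~\eqref{eq:Vandermonde_pq} of each Vandermonde as a determinant of monic polynomials, the product $\Delta_N(X_L)\,\Delta_N(X_R)$ together with the two exponential factors becomes
\begin{align}
    \ee^{-\tr V_L(X_L) - \tr V_R(X_R)} \Delta_N(X_L)\Delta_N(X_R)
    = \det_{1 \le i,j \le N}\qty(\ee^{-V_L(x_{L,j})} p_{N-i}(x_{L,j})) \det_{1 \le i,j \le N}\qty(\ee^{-V_R(x_{R,j})} q_{N-i}(x_{R,j})) .
\end{align}
The product of these two $N \times N$ determinants is, by the Cauchy--Binet formula, equal to $\det_{1 \le i,j \le N} \sum_{a=0}^{N-1} \ee^{-V_R(x_{R,i})} q_a(x_{R,i}) \ee^{-V_L(x_{L,j})} p_a(x_{L,j})$ --- here one reindexes $N - i \mapsto a$ and inserts the identity matrix, or, more symmetrically, inserts $\mathsf{N}^{-1}\mathsf{N}$ between the two factors; matching against~\eqref{eq:CD_def} one reads off exactly $\det_{1 \le i,j \le N} K_N(x_{R,i}, x_{L,j})$ after stripping off the (determinant-invariant) factor $\det \mathsf{N}^{-1} \cdot \det \mathsf{N} = 1$ that one is free to insert. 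Dividing by $Z_N N!^2$ and recalling $Z_N = \det \mathsf{N}_{N-i,N-j}$ from the preceding proposition then cancels the normalization and yields the second line of~\eqref{eq:P_N}.

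Next I would verify $\int \prod_{k} \dd{X}_k\, \mathsf{P}_N = 1$. Starting from the determinantal form just obtained, I would apply the AH identity (Lemma~\ref{lemma:AH_id}) to integrate out $X_L$ and $X_R$ simultaneously against $\det \omega(x_{L,i},x_{R,j}) \det K_N(x_{R,i},x_{L,j})$; this converts the double integral of the product of two determinants into a single $N \times N$ determinant whose $(i,j)$ entry is $\int \dd{x_L}\dd{x_R}\, \omega(x_L,x_R) K_N(x_L,x_R)$ --- wait, one must be careful with which variable pairs up, so more precisely one integrates first over $X_L$ using AH on the $x_L$-dependence, producing $\int \dd{x_L} \ee^{-V_L(x_L)} p_a(x_L)\,\omega(x_L,x_{R,j})$, and then over $X_R$. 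Using biorthogonality $\cvev{P_i \mid \omega \mid Q_j} = h_i \delta_{i,j}$ (equivalently $\sum_{i,j} q_i (\mathsf N^{-1})_{i,j} p_j$ paired through $\omega$ collapses), the resulting matrix element becomes $\delta$-like, so the determinant evaluates to $1$. The cleanest route is to expand $K_N$ in the wave functions $\phi_i,\psi_i$ of~\eqref{eq:wf_def}, note $\int \dd{x_L}\dd{x_R}\,\ee^{-V_L - V_R}\,p_a(x_L)\omega(x_L,x_R) q_b(x_R) = \mathsf N_{a,b}$, and track how the sums over the internal indices combine with $\det \omega$ so that the AH output is $\det(\mathsf N^{-1}\mathsf N) = \det \id_N = 1$.

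The main obstacle is bookkeeping: getting the index ranges, the $N-i$ versus $i$ relabelings, and the pairing of $x_L$ with $x_R$ (versus $x_R$ with $x_L$) consistent between $\omega$ and $K_N$, since $K_N$ is defined with its arguments in the order $(x_R,x_L)$. One must also confirm that the $1/N!^2$ and the $\det \mathsf N$ factors cancel exactly rather than up to a combinatorial constant --- this is where invoking the previous proposition $Z_N = \det_{1\le i,j\le N}\mathsf N_{N-i,N-j}$ and the two applications of the AH identity (one with $1/N!$ each for $X_L$ and $X_R$) must be matched carefully against the two determinants. Once the indices are aligned, both steps are essentially Cauchy--Binet plus AH plus biorthogonality, with no analytic subtlety given the formal-integral convention already adopted.
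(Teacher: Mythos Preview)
The paper states this proposition without proof, so there is nothing to compare against directly; your plan is the natural way to fill the gap and is essentially correct. Two clarifications are worth making.

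For the kernel identity, what you need is not really Cauchy--Binet but plain multiplicativity of determinants for square matrices: writing $K_N(x_{R,i},x_{L,j})=(A\,\mathsf{N}^{-1}B)_{i,j}$ with $A_{i,a}=\ee^{-V_R(x_{R,i})}q_a(x_{R,i})$ and $B_{b,j}=\ee^{-V_L(x_{L,j})}p_b(x_{L,j})$ gives $\det K_N=\det A\cdot\det\mathsf{N}^{-1}\cdot\det B$; the two sign factors $(-1)^{\binom{N}{2}}$ from reindexing $a\leftrightarrow N-a$ in $A$ and $B$ cancel, leaving $\det A\,\det B=\ee^{-\tr V_L-\tr V_R}\Delta_N(X_L)\Delta_N(X_R)$ and $\det\mathsf{N}^{-1}=Z_N^{-1}$ by the preceding proposition. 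Your phrasing ``insert $\mathsf{N}^{-1}\mathsf{N}$ and strip off $\det\mathsf{N}^{-1}\det\mathsf{N}=1$'' obscures this: the $\mathsf{N}^{-1}$ is already built into the definition~\eqref{eq:CD_def} of $K_N$, and its determinant is precisely the $Z_N^{-1}$ that matches the prefactor in the first line of~\eqref{eq:P_N} --- there is no extra $\det\mathsf{N}$ to remove.

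For the normalization, note that the first line of~\eqref{eq:P_N} is literally $Z_N^{-1}$ times the integrand whose integral defines $Z_N$, so $\int\mathsf{P}_N=1$ is immediate once you accept that line as the definition. Your AH-plus-biorthogonality verification starting from the second line also works (two applications of Lemma~\ref{lemma:AH_id} followed by $\rvev{\phi_i\mid\omega\mid\psi_j}=\delta_{i,j}$ collapse the determinant to $\det\id_N=1$), but it is longer than necessary once the two lines are shown equal.
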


\begin{definition}[Expectation value]
We define the expectation value with respect to the probability distribution function $\mathsf{P}_N(X_{L,R})$ as follows,
\begin{align}
    \rvev{ \mathcal{O}(X_{L,R}) } = \int \prod_{k = L,R} \dd{X}_k \mathsf{P}_N(X_{L,R}) \mathcal{O}(X_{L,R}) 
    \, .
\end{align}
\end{definition}

\subsection{Operator formalism}

\begin{definition}
We define an inner product symbol,
\begin{subequations}
\begin{align}
    \rvev{ f \mid g } & = \int \dd{x} f(x) g(x)
    \, , \\
    \rvev{ f \mid \omega \mid g } & = \int \dd{x}_{L,R} f(x_L) \omega(x_L,x_R) g(x_R)
    \, .
\end{align}
\end{subequations}
We remark that, compared with the previous notation~\eqref{eq:cvev}, this definition does not depend on the potential function.
\end{definition}

Then, the orthonormality of the wave functions $(\phi_i,\psi_i)$ defined in \eqref{eq:wf_def} is expressed as
\begin{align}
    \rvev{ \phi_i \mid \omega \mid \psi_j } 
    = \int \dd{x}_{L,R} \phi_i(x_L) \omega(x_L,x_R) \psi_j(x_R)
    = \delta_{i,j}
    \, ,
\end{align}
where we write
\begin{align}
    \phi_i(x) = \rvev{ \phi_i \mid x }
    \, , \qquad
    \psi_i(x) = \rvev{ x \mid \psi_i }
    \, , \qquad
    \omega(x_L,x_R) = \rvev{ x_L \mid \hat{\omega} \mid x_R }
    \, .
\end{align}
together with the completeness condition
\begin{align}
    1 = \int \dd{x} \ket{x} \bra{x}
    \, .
\end{align}

In this operator formalism, the CD kernel is given by a matrix element of the operator defined as
\begin{align}
    K_N(x_R, x_L) & = \rvev{ x_R \mid \hat{K}_N \mid x_L }
    \, , \qquad
    \hat{K}_N = \sum_{i = 0}^{N-1} \ket{\psi_i} \bra{\phi_i}
    \, .
\end{align}
Introducing infinite dimensional vectors
\begin{align}
    \ket{ \underline{\phi} } = 
    \begin{pmatrix}
    & \ket{\phi_0} & \ket{\phi_1} & \ket{\phi_2} & \cdots &
    \end{pmatrix}^\text{T}
    \, , \qquad
    \ket{ \underline{\psi} } = 
    \begin{pmatrix}
    & \ket{\psi_0} & \ket{\psi_1} & \ket{\psi_2} & \cdots &
    \end{pmatrix}^\text{T}
    \, ,
\end{align}
together with the projection matrix
\begin{align}
    \left( \Pi_N \right)_{i,j} =
    \begin{cases}
        1 & (i = j \in [0,\ldots,N-1]) \\
        0 & (\text{otherwise})
    \end{cases}
\end{align}
the CD kernel operator is written as
\begin{align}
    \hat{K}_N = \ket{ \underline{\psi} } \Pi_N \bra{ \underline{\phi} }
    \, .
\end{align}
In the limit $N \to \infty$, we have
\begin{align}
    \lim_{N \to \infty} {K}_N(x_R,x_L) = \sum_{i=0}^\infty \psi_i(x_R) \phi_i(x_L) = \rvev{x_R \mid \omega^{-1} \mid x_L } =: \tilde{\omega}(x_R,x_L)
    \, ,
\end{align}
such that
\begin{align}
    \int \dd{z} \omega(x,z) \tilde{\omega}(z,y) = \int \dd{z} \tilde\omega(x,z) {\omega}(z,y) = \delta(x-y)
    \, .
\end{align}

\begin{proposition}
The CD kernel is self-reproducing
\begin{align}
    \hat{K}_N \cdot \hat{\omega} \cdot \hat{K}_N = \hat{K}_N
    \, , \qquad
    \tr \left( \hat{\omega} \cdot \hat{K}_N \right) = N
    \, ,
\end{align}
and therefore the correlation functions are in general determinantal (Eynard--Mehta's theorem~\cite{Eynard:1998JPA}).
\end{proposition}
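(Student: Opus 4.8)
The plan is to verify the two operator identities directly from the definition $\hat{K}_N = \sum_{i=0}^{N-1} \ket{\psi_i} \bra{\phi_i}$ together with the biorthonormality $\bra{\phi_i} \hat{\omega} \ket{\psi_j} = \rvev{\phi_i \mid \omega \mid \psi_j} = \delta_{i,j}$ established above, and then to invoke the Eynard--Mehta theorem for the determinantal conclusion. No deep input is needed beyond these two facts.

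First I would expand the composition $\hat{K}_N \cdot \hat{\omega} \cdot \hat{K}_N = \sum_{i,j = 0}^{N-1} \ket{\psi_i} \bra{\phi_i} \hat{\omega} \ket{\psi_j} \bra{\phi_j}$ and use biorthonormality to collapse the inner contraction $\bra{\phi_i} \hat{\omega} \ket{\psi_j} = \delta_{i,j}$, which leaves $\sum_{i=0}^{N-1} \ket{\psi_i}\bra{\phi_i} = \hat{K}_N$. Written out in terms of kernels this is the self-reproducing convolution $\int \dd{z}_{L,R} K_N(x_R,z_L)\, \omega(z_L,z_R)\, K_N(z_R,y_L) = K_N(x_R,y_L)$, which may alternatively be checked by inserting the completeness relation $1 = \int \dd{x} \ket{x}\bra{x}$ on each side of $\hat{\omega}$ and applying $\rvev{\phi_i \mid \omega \mid \psi_j} = \delta_{i,j}$ directly. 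For the trace identity I would use cyclicity, $\tr\qty( \hat{\omega} \ket{\psi_i}\bra{\phi_i} ) = \bra{\phi_i} \hat{\omega} \ket{\psi_i} = \delta_{i,i} = 1$, and sum over $i = 0,\ldots,N-1$; equivalently $\tr\qty(\hat{\omega} \cdot \hat{K}_N) = \int \dd{x}_{L,R} \omega(x_L,x_R) K_N(x_R,x_L) = \sum_{i=0}^{N-1} \rvev{\phi_i \mid \omega \mid \psi_i} = N$.

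For the last assertion, I would observe that the probability density~\eqref{eq:P_N} is already in biorthogonal-ensemble form $\frac{1}{N!^2} \det_{1\le i,j \le N} \omega(x_{L,i},x_{R,j}) \, \det_{1 \le i,j \le N} K_N(x_{R,i},x_{L,j})$ with a kernel $K_N$ obeying the reproducing property just established; this is precisely the hypothesis of the Eynard--Mehta theorem~\cite{Eynard:1998JPA}, from which the mixed correlation functions of the eigenvalues of $X_L$ and $X_R$ follow as a single determinant of a matrix-valued kernel assembled from $\hat{K}_N$, $\hat{\omega} \hat{K}_N$, $\hat{K}_N \hat{\omega}$ and $\hat{\omega} \hat{K}_N \hat{\omega} - \hat{\omega}$. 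I would simply cite this step rather than rerun the induction on the number of insertion points.

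There is no serious obstacle here; the only point requiring care is the bookkeeping of which variable ($L$ or $R$) each of $\phi_i$, $\psi_i$ and $\omega$ acts on, so that the contractions appearing really are $\bra{\phi_i}\hat{\omega}\ket{\psi_j}$ and are correctly identified with $\delta_{i,j}$. Since $N$ is finite all the sums are finite, and the formal nature of the eigenvalue integrals causes no difficulty in these manipulations. The longest part, were one to avoid citing Eynard--Mehta, would be re-deriving the determinantal form of the correlators from the reproducing kernel property.
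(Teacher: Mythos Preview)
Your proposal is correct. The paper in fact states this proposition without an explicit proof, treating the two operator identities as immediate consequences of the biorthonormality $\rvev{\phi_i \mid \omega \mid \psi_j} = \delta_{i,j}$ and the finite-rank form $\hat{K}_N = \sum_{i=0}^{N-1} \ket{\psi_i}\bra{\phi_i}$ just introduced; your direct expansion and trace computation are precisely the intended verification, and the determinantal conclusion is likewise deferred to the cited Eynard--Mehta reference.
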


\subsection{Polynomial ensemble}\label{subsec:pe}

We consider a generalization of the coupled matrix model, that we call the coupled polynomial ensemble, which is a coupled version of the polynomial ensemble introduced in Ref.~\cite{Kuijlaars:2014RMTA}.
We define the following generalized coupled matrix model partition functions.
\begin{definition}
Let $(f_{k,i})_{k=L,R,0 = 1,\ldots,N-1}$ be a set of arbitrary functions.
We define the polynomial ensemble partition functions as follows,
\begin{subequations}\label{eq:Z_fn_pe}
\begin{align}
    Z_{N,f_L} & = \frac{1}{N!^2} \int \dd{X}_{L,R} \ee^{-\tr V_R(X_R)} \det_{1 \le i, j \le N} f_{L,N-i}(x_{L,j}) \det_{1 \le i, j \le N} \omega(x_{L,i},x_{R,j}) \Delta_N(X_R)
    \, , \\
    Z_{N,f_R} & = \frac{1}{N!^2} \int \dd{X}_{L,R} \ee^{-\tr V_L(X_L)} \Delta_N(X_L) \det_{1 \le i, j \le N} \omega(x_{L,i},x_{R,j}) \det_{1 \le i, j \le N} f_{R,N-i}(x_{R,j}) 
    \, .
\end{align}
\end{subequations}
\end{definition}
\begin{remark}
Specializing each function $(f_{k,i})_{k = L,R, i = 0,\ldots,N-1}$ to be a monic polynomial, these partition functions~\eqref{eq:Z_fn_pe} are reduced to the original one~\eqref{eq:Z_fn}.
\end{remark}

\noindent
These partition functions show the determinantal structure as discussed before.
In order to discuss their properties, we introduce the notation.
\begin{definition}[Mixed braket notation]
We define the following inner product symbol,
\begin{subequations}
\begin{align}
    \crvev{ f \mid g } & = \int \dd{x}_{L,R} \ee^{- V_L(x_L)} f(x_L) g(x_R)
    \, , \\
    \rcvev{ f \mid g } & = \int \dd{x}_{L,R} \ee^{- V_R(x_R)} f(x_L) g(x_R)
    \, .
\end{align}
\end{subequations}
\end{definition}

\noindent
We obtain the following result.

\begin{proposition}
The partition function of the polynomial ensemble is written as a rank $N$ determinant with a set of arbitrary monic polynomials $(p_i,q_i)_{i = 0,\ldots,N-1}$,
\begin{align}
    Z_{N,f_L} & = \det_{1 \le i, j \le N} \rcvev{ f_{L,N-i} \mid \omega \mid q_{N-j} }
    \, , \\
    Z_{N,f_R} & = \det_{1 \le i, j \le N} \crvev{ p_{N-i} \mid \omega \mid f_{R,N-j} }
    \, .
\end{align}
\end{proposition}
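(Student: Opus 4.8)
The plan is to mimic the proof of the original determinantal formula for $Z_N$, replacing one of the two Vandermonde determinants by the given determinant of arbitrary functions. First I would write, as in~\eqref{eq:Vandermonde_pq}, the surviving Vandermonde determinant in terms of a set of arbitrary monic polynomials: for $Z_{N,f_L}$ one keeps $\Delta_N(X_R) = \det_{1 \le i, j \le N} q_{N-j}(x_{R,i})$, while $\det_{1 \le i, j \le N} f_{L,N-i}(x_{L,j})$ is already in determinantal form; for $Z_{N,f_R}$ one keeps $\Delta_N(X_L) = \det_{1 \le i, j \le N} p_{N-j}(x_{L,i})$. In either case the integrand is a product of three $N \times N$ determinants, one in the $X_L$ variables, one in the $X_R$ variables, and the coupling determinant $\det_{1 \le i, j \le N} \omega(x_{L,i},x_{R,j})$.

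Next I would integrate out the $X_L$ and $X_R$ eigenvalues using the Andréief--Heine identity (Lemma~\ref{lemma:AH_id}). The standard manoeuvre is to fold the $\omega$-determinant together with the $X_R$-determinant first: applying the AH identity in the $X_R$ variables turns $\int \dd{X}_R\, \ee^{-\tr V_R(X_R)} \det \omega(x_{L,i},x_{R,j}) \det q_{N-j}(x_{R,i})$ into $N!\, \det_{1 \le i, j \le N} \qty( \int \dd{x}_R\, \ee^{-V_R(x_R)} \omega(x_{L,i},x_R) q_{N-j}(x_R) )$, i.e.\ a single $N \times N$ determinant whose entries are functions of the $x_{L,i}$. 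One more application of AH in the $X_L$ variables, paired against $\det f_{L,N-i}(x_{L,j})$ and the weight $1$ in the $x_L$ slot (no $\ee^{-V_L}$ factor is present here, which is exactly why the half-dressed bracket $\rcvev{\cdot \mid \cdot}$ appears), collapses everything to $\det_{1 \le i, j \le N} \rcvev{ f_{L,N-i} \mid \omega \mid q_{N-j} }$, and the two factors of $N!$ cancel the $1/N!^2$ prefactor. The computation for $Z_{N,f_R}$ is the mirror image: keep the $X_L$-Vandermonde as $\det p_{N-j}(x_{L,i})$, fold $\omega$ with it via AH in $X_L$, then integrate $X_R$ against $\det f_{R,N-i}(x_{R,j})$ with weight $1$, producing $\det_{1 \le i, j \le N} \crvev{ p_{N-i} \mid \omega \mid f_{R,N-j} }$.

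The only subtlety worth flagging is bookkeeping of which variable carries the potential factor: since $Z_{N,f_L}$ has only $\ee^{-\tr V_R(X_R)}$ in its measure, the resulting bracket must be the one-sided $\rcvev{\cdot \mid \omega \mid \cdot}$ that weights the right variable by $\ee^{-V_R}$ and the left variable by nothing — this is precisely how the mixed-braket notation was set up, so matching the definitions is immediate. One should also check that the order of indices ($N-i$ versus $N-j$, and whether the transpose of a determinant has been silently used) is consistent with the statement; transposing one determinant if needed costs only an overall sign that is absorbed. I do not expect any genuine obstacle: the argument is a direct transcription of the proof already given for the monic case, with the single replacement of one Vandermonde by an arbitrary-function determinant, and with the weight in the corresponding integration slot set to $1$ instead of $\ee^{-V_k}$. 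This completes the plan.
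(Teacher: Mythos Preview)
Your proposal is correct and follows essentially the same approach as the paper: rewrite the surviving Vandermonde via~\eqref{eq:Vandermonde_pq} in terms of arbitrary monic polynomials, then apply the Andr\'eief--Heine identity to collapse the triple-determinant integral into a single rank~$N$ determinant, identifying the resulting entries with the mixed bracket. The paper's proof is terser (it performs the two AH steps in one line without separating the $X_L$ and $X_R$ integrations), but the logic is identical, and your explicit remark about which variable carries the potential factor is a useful clarification that the paper leaves implicit.
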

\begin{proof}
We obtain this formula by direct calculation.
Recalling the Vandermonde determinant is given as~\eqref{eq:Vandermonde_pq} with a set of arbitrary monic polynomials, we have
\begin{align}
    Z_{N,f_L} & = \frac{1}{N!^2} \int \dd{X}_{L,R} \ee^{-\tr V_R(X_R)} \det_{1 \le i, j \le N} f_{L,N-i}(x_{L,j}) \det_{1 \le i, j \le N} \omega(x_{L,i},x_{R,j}) \det_{1 \le i, j \le N} q_{N-j}(x_{R,i})
    \nonumber \\
    & = \det_{1 \le i, j \le N} \qty( \int \dd{x}_{L,R} \ee^{-V_R(x_R)} f_{L,N-i}(x_L) \omega(x_L,x_R) q_{N-j}(x_R) )
    \nonumber \\
    & = \det_{1 \le i, j \le N} \rcvev{ f_{L,N-i} \mid \omega \mid q_{N-j} }
    \, .
\end{align}
We can obtain the other formula in the same way.
\end{proof}

\begin{definition}[Biorthogonal functions]
We can then define two pairs of biorthogonal families $(F_{L,i},Q_{j})_{i,j = 0,\ldots,N-1}$ and $(P_{i}, F_{R,j})_{i,j = 0,\ldots,N-1}$ such that:
\begin{itemize}
\item The functions $P_i$ and $Q_j$ are monic polynomials. 
\item The functions $F_{L,i}$ (resp. $F_{R,i}$) are linearly spanned by the functions $(f_{L,k})_{k=0,\cdots,i}$ (resp. $(f_{R,k})_{k=0,\cdots,i}$). 
\item They satisfy the following scalar product properties: 
\begin{subequations}
\begin{align}
    \rcvev{ F_{L,i} \mid \omega \mid Q_{j} } &= h_{L,i} \delta_{i,j}
    \qquad (i,j=0,\ldots,N-1) , \\
    \crvev{ P_{i} \mid \omega \mid F_{R,j} } &= h_{R,i} \delta_{i,j}
    \qquad (i,j=0,\ldots,N-1) .
\end{align}
\end{subequations}
\end{itemize}
\end{definition}

\begin{corollary}
The partition functions of the coupled polynomial ensemble~\eqref{eq:Z_fn_pe} take the following form in terms of the normalization constants $(h_{k,i})_{k = L,R, i = 0,\ldots,N-1}$,
\begin{align}
    Z_{N,f_k} & = \prod_{i=0}^{N-1} h_{k,i}
    \qquad ( k = L, R )
    \, .
\end{align}
\end{corollary}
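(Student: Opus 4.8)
The plan is to reduce both identities to the determinantal formula of the preceding Proposition, which is valid for \emph{any} choice of monic polynomials, and then to exploit the triangularity that is built into the definition of the biorthogonal families.

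Take $Z_{N,f_L}$. In the Proposition choose the right-hand monic polynomials $q_i$ to be the biorthogonal polynomials $Q_i$ themselves (legitimate, since the $Q_i$ are monic by definition), so that
\[
 Z_{N,f_L} = \det_{1\le i,j\le N} \rcvev{ f_{L,N-i} \mid \omega \mid Q_{N-j} } .
\]
Now I would perform row operations on this $N\times N$ matrix. By definition $F_{L,i}$ lies in the span of $f_{L,0},\ldots,f_{L,i}$, so — in the natural normalization in which the coefficient of $f_{L,i}$ in $F_{L,i}$ equals $1$, the analogue of ``monic'' for the family $f_{L,\bullet}$ — one has $F_{L,i}=f_{L,i}+\sum_{k<i}A_{ik}f_{L,k}$. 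The row labelled by $f_{L,N-i}$ is the $i$-th row, and the functions $f_{L,N-i-1},\ldots,f_{L,0}$ occupy the rows $i+1,\ldots,N$; hence, processing the rows from top to bottom, replacing $f_{L,N-i}$ by $F_{L,N-i}$ amounts to adding to row $i$ a combination of the rows below it, which does not change the determinant. After all rows are converted,
\[
 Z_{N,f_L} = \det_{1\le i,j\le N} \rcvev{ F_{L,N-i} \mid \omega \mid Q_{N-j} }
 = \det_{1\le i,j\le N} \big( h_{L,N-i}\,\delta_{i,j} \big) = \prod_{i=0}^{N-1} h_{L,i} ,
\]
using $\rcvev{F_{L,i}\mid\omega\mid Q_j}=h_{L,i}\delta_{ij}$; the resulting matrix is diagonal, so no sign from the reversal $i\mapsto N-i$ appears. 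For $Z_{N,f_R}$ I would run the mirror argument: choose $p_i=P_i$ in the Proposition, expand each $f_{R,j}$ in the $F_{R,k}$ via the (unit-)triangular transition matrix, and invoke $\crvev{P_i\mid\omega\mid F_{R,j}}=h_{R,i}\delta_{ij}$.

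The only point that needs care — and which I would state explicitly — is the normalization of the biorthogonal families. The Definition pins down $F_{L,i}$ only through $F_{L,i}\in\mathrm{span}(f_{L,0},\ldots,f_{L,i})$ together with the orthogonality relation, which leaves the rescaling freedom $F_{L,i}\mapsto\lambda_i F_{L,i}$, under which $h_{L,i}\mapsto\lambda_i h_{L,i}$; the clean identity $Z_{N,f_L}=\prod_i h_{L,i}$ holds precisely in the normalization where the $f_{L,i}$-component of $F_{L,i}$ equals $1$ (with a general triangular normalization one picks up the product of the diagonal entries of the transition matrix, which is then exactly what is cancelled). Existence and uniqueness of the biorthogonal families, once so normalized, is equivalent to non-vanishing of the leading principal minors of the norm matrix, which is already implicit in the Definition, so I would simply invoke it rather than reprove it. Everything else is the Andréief-type bookkeeping already used in the earlier proofs, so I expect no real obstacle beyond making the normalization convention unambiguous.
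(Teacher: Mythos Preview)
Your argument is correct and is essentially the same as the paper's: both start from the determinantal formula of the preceding Proposition and use row/column operations (invariance of the determinant under triangular changes of basis) to pass from $(f_{L,\bullet},q_\bullet)$ to $(F_{L,\bullet},Q_\bullet)$, after which biorthogonality diagonalizes the matrix. Your explicit discussion of the normalization convention on $F_{L,i}$ is a useful clarification that the paper leaves implicit.
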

\begin{proof}
Once recalling that the determinant is invariant under linear operations on rows and columns, one can express it in terms of the biorthogonal functions defined before,
\begin{subequations}
\begin{align}
    Z_{N,f_L} & = \det_{1 \le i, j \le N} \rcvev{ F_{L,i-1} \mid \omega \mid Q_{j-1} }
    \, , \\
    Z_{N,f_R} & = \det_{1 \le i, j \le N} \crvev{ P_{i-1} \mid \omega \mid F_{R,j-1} }
    \, .
\end{align}
\end{subequations}
which is exactly the desired expression.
\end{proof}

\begin{definition}[Christoffel--Darboux kernel]
We define the CD kernels for the coupled polynomial ensemble as follows,
\begin{subequations}
\begin{align}
K_{N,f_L}(x,y) &= \ee^{-V_R(x)} \sum_{i=0}^{N-1} \frac{Q_i(x)F_{L,i}(y)}{h_{L,i}}  
\, , \\
K_{N,f_R}(x,y) &= \ee^{-V_L(y)} \sum_{i=0}^{N-1} \frac{F_{R,i}(x)P_i(y)}{h_{R,i}}
\, .
\end{align}
\end{subequations}
\end{definition}

\begin{remark}
As for the ordinary coupled matrix model~\eqref{eq:Z_fn}, one can define the following biorthonormal wave functions
\begin{align}
\psi_{L,i}(x) = \frac{1}{\sqrt{h_{L,i}}} \ee^{-V_R(x)} Q_i(x)
\, , \qquad
\phi_{L,i}(x) = \frac{1}{\sqrt{h_{L,i}}} F_{L,i}(x) \, ,\\
\psi_{R,i}(x) = \frac{1}{\sqrt{h_{R,i}}} \ee^{-V_L(x)} P_i(x)
\, , \qquad
\phi_{R,i}(x) = \frac{1}{\sqrt{h_{R,i}}} F_{R,i}(x)
\, ,
\end{align}
and the CD kernels take then a very compact form. 
\end{remark}

\begin{proposition}
The probability distributions for the coupled polynomial ensemble can be expressed as
\begin{align}
    \mathsf{P}_{N,f_L}(X_{L,R})  &= \frac{1}{N!^2} \det_{1 \le i, j \le N} \omega(x_{L,i},x_{R,j}) \det_{1 \le i, j \le N} K_{N,f_L}(x_{R,i},x_{L,j})    
    \, , \\
    \mathsf{P}_{N,f_R}(X_{L,R})  &= \frac{1}{N!^2} \det_{1 \le i, j \le N} \omega(x_{L,i},x_{R,j}) \det_{1 \le i, j \le N} K_{N,f_R}(x_{R,i},x_{L,j})
    \,.
\end{align}
\end{proposition}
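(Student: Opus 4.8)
The plan is to run the same computation that produces the determinantal form~\eqref{eq:P_N} of $\mathsf{P}_N$ for the ordinary coupled matrix model, now dragging the arbitrary functions $f_{L,i}$ (resp. $f_{R,i}$) along. The two statements are images of one another under $L\leftrightarrow R$, so I will only treat $\mathsf{P}_{N,f_L}$. By definition (cf.\ the first line of~\eqref{eq:P_N}),
\begin{align}
\mathsf{P}_{N,f_L}(X_{L,R}) = \frac{Z_{N,f_L}^{-1}}{N!^2}\, \ee^{-\tr V_R(X_R)} \det_{1\le i,j\le N} f_{L,N-i}(x_{L,j}) \det_{1\le i,j\le N}\omega(x_{L,i},x_{R,j})\, \Delta_N(X_R) ,
\end{align}
and I would substitute $Z_{N,f_L} = \prod_{i=0}^{N-1} h_{L,i}$ from the Corollary. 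Then, using that a determinant is unchanged under the unitriangular changes of basis $(f_{L,k})\to(F_{L,k})$ and $x^k\to q_k\to Q_k$ (the same normalization of the biorthogonal families that was used to derive the Corollary), I would replace $\det_{ij} f_{L,N-i}(x_{L,j})$ by $\det_{ij} F_{L,N-i}(x_{L,j})$ and $\Delta_N(X_R)$ by $\det_{ij} Q_{N-j}(x_{R,i})$.

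The heart of the argument is to recognize the remaining product of the $F$-determinant, the scalars $h_{L,k}^{-1}$, the $Q$-determinant and the factor $\ee^{-\tr V_R(X_R)}$ as $\det_{ij} K_{N,f_L}(x_{R,i},x_{L,j})$. Writing $A_{ik}=Q_k(x_{R,i})$, $D=\diag(h_{L,0}^{-1},\ldots,h_{L,N-1}^{-1})$ and $B_{kj}=F_{L,k}(x_{L,j})$ with $i,j\in\{1,\ldots,N\}$ and $k\in\{0,\ldots,N-1\}$, the definition of $K_{N,f_L}$ gives $K_{N,f_L}(x_{R,i},x_{L,j})=\ee^{-V_R(x_{R,i})}(ADB)_{ij}$, hence by multiplicativity of the determinant
\begin{align}
\det_{1\le i,j\le N} K_{N,f_L}(x_{R,i},x_{L,j}) = \ee^{-\tr V_R(X_R)}\, \det A\, \det D\, \det B = \ee^{-\tr V_R(X_R)}\,\frac{1}{\prod_{k=0}^{N-1} h_{L,k}}\, \det A\, \det B .
\end{align}
Reversing the column order of $\det A$ and the row order of $\det B$ turns them into $\Delta_N(X_R)=\det_{ij}Q_{N-j}(x_{R,i})$ and $\det_{ij}F_{L,N-i}(x_{L,j})=\det_{ij}f_{L,N-i}(x_{L,j})$ at the cost of two equal sign factors $(-1)^{N(N-1)/2}$ that cancel. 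Putting this back, multiplying by $\tfrac{1}{N!^2}\det_{ij}\omega(x_{L,i},x_{R,j})$ and comparing with the displayed form of $\mathsf{P}_{N,f_L}$ gives the claim; the normalization $\int\prod_{k}\dd{X}_k\,\mathsf{P}_{N,f_L}=1$ then follows at once from the determinantal formula for $Z_{N,f_L}$. The $\mathsf{P}_{N,f_R}$ case is identical with $(V_R,Q_i,F_{L,i})$ replaced by $(V_L,P_i,F_{R,i})$ and the roles of $x_L,x_R$ exchanged.

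I do not anticipate a genuine obstacle: the content is precisely the matrix factorization $K_{N,f_L}=ADB$ together with index bookkeeping. The only points that need a moment's care are (i) fixing the normalization of $(F_{L,i})$, $(F_{R,i})$, $(P_i)$, $(Q_i)$ so that all the basis-change determinants equal $1$ — which is already implicit in the Corollary — and (ii) checking that the two order-reversal signs really cancel and that the $\ee^{-V_R}$ factors attach to the $x_R$-argument of the kernel; both are immediate once $A$, $D$, $B$ are written down.
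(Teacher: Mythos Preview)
Your proposal is correct. The paper states this proposition without proof (just as it does for the analogous Proposition~\eqref{eq:P_N} in the ordinary coupled model), so there is nothing to compare against; your argument via the factorization $K_{N,f_L}(x_{R,i},x_{L,j})=\ee^{-V_R(x_{R,i})}(ADB)_{ij}$ and multiplicativity of the determinant is exactly the natural computation, and your caveats about the unitriangular normalization of $(F_{L,i})$ and the cancellation of the two order-reversal signs are handled correctly.
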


\begin{remark}
All the previous formulas lead to the familiar matrix model formalism. Therefore the correlation functions of the Schur polynomial and the characteristic polynomials shown in the following sections are straightforwardly generalized to the coupled polynomial ensemble (except for the pair correlation functions discussed in Section~\ref{sec:pair_corr}).
We obtain a natural generalization of the results for the characteristic polynomial average with the source term~\cite{Kimura:2014mua,Kimura:2014vra,Kimura:2021hph} and also the one-matrix polynomial ensemble~\cite{Akemann:2020yti}.
\end{remark}

\section{Characteristic polynomial averages}\label{sec:ch_poly}

\subsection{Schur polynomial average}\label{sec:Schur_av}

We first compute the Schur polynomial average for the coupled matrix model, which will be a building block of the correlation functions of the characteristic polynomials~\cite{Kimura:2021hph}.
See also~\cite{Santilli:2021iks} for a related work.

\begin{definition}[Schur polynomial]
Let $\lambda$ be a partition, a non-increasing sequence of non-negative integers,
\begin{align}
    \lambda = (\lambda_1 \ge \lambda_2 \ge \cdots \ge \lambda_\ell > \lambda_{\ell+1} = \cdots = 0 )
    \, ,
\end{align}
where $\ell = \ell(\lambda)$ is called the length of the partition.
Denoting the transposed partition by $\lambda^\text{T}$, we have $\ell(\lambda) = \lambda_1^\text{T}$.
Then, the Schur polynomial of $N$ variables, $X = (x_i)_{i = 1,\ldots,N}$, is defined as follows,
\begin{align}
    s_\lambda(X) = \frac{1}{\Delta_N(X)} \det_{1 \le i, j \le N} x_i^{\lambda_j + N - i}
    \, .
    \label{eq:Schur_def}
\end{align}
If $\ell(\lambda) > N$, we have $s_\lambda(X) = 0$.
We also remark $s_\emptyset(X) = 1$.
\end{definition}

\begin{lemma}\label{lemma:Schur_average}
The Schur polynomial average with respect to the probability distribution function $\mathsf{P}_N(X_{L,R})$~\eqref{eq:P_N} is given as a rank $N$ determinant,
\begin{align}
    \rvev{ s_{\lambda}(X_L) s_\mu(X_R) }
    & = \frac{1}{Z_N} \det_{1 \le i, j \le N} \cvev{ x_{L}^{\lambda_{i} + N - i} \mid \omega \mid x_R^{\mu_j + N - j} }
    \, .
\end{align}
\end{lemma}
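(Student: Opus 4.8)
The plan is to run, almost verbatim, the computation that proved the determinantal formula $Z_N = \det_{1\le i,j\le N}\mathsf{N}_{N-i,N-j}$, with the monic polynomials $p_{N-j},q_{N-j}$ there replaced by the shifted monomials that build the Schur numerator. First I would substitute the explicit form~\eqref{eq:P_N} of $\mathsf{P}_N$ into the definition of $\rvev{\,\cdot\,}$, obtaining
\[
\rvev{ s_\lambda(X_L)\, s_\mu(X_R) }
= \frac{Z_N^{-1}}{N!^2} \int \prod_{k=L,R}\dd{X}_k\, \ee^{-\tr V_k(X_k)}\,
[\Delta_N(X_L)\, s_\lambda(X_L)]\, \det_{1\le i,j\le N}\omega(x_{L,i},x_{R,j})\, [\Delta_N(X_R)\, s_\mu(X_R)]\,.
\]
The key point is that the Schur numerator absorbs the Vandermonde factor: using the definition~\eqref{eq:Schur_def} of the Schur polynomial exactly as the ordinary Vandermonde is used in~\eqref{eq:Vandermonde_pq},
\[
\Delta_N(X_L)\, s_\lambda(X_L) = \det_{1\le i,j\le N} x_{L,i}^{\lambda_j + N - j}\,,
\qquad
\Delta_N(X_R)\, s_\mu(X_R) = \det_{1\le i,j\le N} x_{R,i}^{\mu_j + N - j}\,.
\]
Hence the integrand is again a product of three $N\times N$ determinants carrying the weights $\ee^{-\tr V_{L}(X_L)-\tr V_R(X_R)}$, structurally identical to the one in the proof of the determinantal formula with $p_{N-j}(x)\mapsto x^{\lambda_j+N-j}$ and $q_{N-j}(x)\mapsto x^{\mu_j+N-j}$.

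From here I would apply the Andréief--Heine identity (Lemma~\ref{lemma:AH_id}) in two steps, exactly as there. Integrating out $X_R$ pairs $\det_{i,j}\omega(x_{L,i},x_{R,j})$ with the (transposed) Schur numerator $\det_{i,j}\ee^{-V_R(x_{R,j})}x_{R,j}^{\mu_i+N-i}$ and yields $\det_{i,j}\qty( \int\dd{y}\,\omega(x_{L,i},y)\,\ee^{-V_R(y)}\,y^{\mu_j+N-j} )$; integrating out $X_L$ then pairs this with $\det_{i,j}\ee^{-V_L(x_{L,j})}x_{L,j}^{\lambda_i+N-i}$. Each application removes one factor $N!$, so the prefactor collapses to $Z_N^{-1}$ and one is left with
\[
\rvev{ s_\lambda(X_L)\, s_\mu(X_R) }
= \frac{1}{Z_N}\det_{1\le i,j\le N}\qty( \int \prod_{k=L,R}\dd{x}_k\, \ee^{-V_k(x_k)}\, x_L^{\lambda_i+N-i}\,\omega(x_L,x_R)\, x_R^{\mu_j+N-j} )\,,
\]
whose matrix entry is precisely $\cvev{ x_L^{\lambda_i+N-i}\mid\omega\mid x_R^{\mu_j+N-j} }$ in the notation~\eqref{eq:cvev}. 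Alternatively one may invoke directly the combined two-variable form of the identity already used in the proof of the determinantal formula and read off the answer in a single line.

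I do not expect a substantive obstacle: the argument is a direct transcription of a computation already carried out in Section~\ref{sec:model}. The only delicate points are bookkeeping. One must track which transpose of each determinant is used so that in the final matrix the row index $i$ carries $\lambda_i+N-i$ while the column index $j$ carries $\mu_j+N-j$, matching the stated form; and one must distribute the weights $\ee^{-V_k}$ correctly — pulling $\ee^{-V_R(x_{R,j})}$ out of the $j$-th column of its determinant, and likewise for $V_L$ — so that after the two integrations the brackets reassemble into the potential-dependent pairing $\cvev{\,\cdot\mid\omega\mid\cdot\,}$ of~\eqref{eq:cvev} rather than the potential-free one. Finally, I would note that the identity is understood for partitions with at most $N$ parts; if $\ell(\lambda)>N$ or $\ell(\mu)>N$ both sides vanish, since then $s_\lambda(X_L)=0$, which is the regime in which this lemma will be used as a building block for the characteristic-polynomial averages in the following subsections.
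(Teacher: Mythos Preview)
Your proposal is correct and follows essentially the same route as the paper's proof: substitute~\eqref{eq:P_N}, use the Schur numerator identity $\Delta_N(X)\,s_\lambda(X)=\det_{i,j}x_i^{\lambda_j+N-j}$ to replace each Vandermonde--Schur product by a single determinant, and then apply the Andr\'eief--Heine identity to collapse the integral to the stated rank~$N$ determinant. Your additional remarks on transpose bookkeeping, the placement of the weights $\ee^{-V_k}$, and the trivial vanishing when $\ell(\lambda)>N$ or $\ell(\mu)>N$ are sound and simply make explicit what the paper's ``direct calculation'' leaves implicit.
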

\begin{proof}
This can be shown by direct calculation,
\begin{align}
    & \rvev{ s_{\lambda}(X_L) s_\mu(X_R) }
    \nonumber \\
    & = \int \prod_{k = L,R} \dd{X}_k \mathsf{P}_N(X_{L,R}) s_{\lambda}(X_L) s_\mu(X_R)
    \nonumber \\
    & = \frac{Z_N^{-1}}{N!^2} \int \prod_{k = L,R} \ee^{- \tr V_k(X_k)} \det_{1 \le i, j \le N} x_{L,i}^{\lambda_j + N - j} \det_{1 \le i, j \le N} \omega(x_{L,i},x_{R,j}) \det_{1 \le i, j \le N} x_{R,i}^{\mu_j + N - j}
    \nonumber \\
    & = \frac{1}{Z_N} \det_{1 \le i, j \le N} \qty( \int \prod_{k = L,R} \dd{x}_{k} \ee^{-V_k(x_k)} x_{L}^{\lambda_i + N - i} \omega(x_L,x_R) x_R^{\mu_j + N - j} )
    \nonumber \\
    & = \frac{1}{Z_N} \det_{1 \le i, j \le N} \cvev{ x_{L}^{\lambda_{i} + N - i} \mid \omega \mid x_R^{\mu_j + N - j} }
    \, .
\end{align} 
This completes the proof.
\end{proof}

\begin{lemma}[Schur polynomial expansion]\label{lemma:Schur_expansion}
Let $Z = \diag(z_1,\ldots,z_M)$.
The characteristic polynomial is expanded with the Schur polynomial as follows,
\begin{subequations}
\begin{align}
    \prod_{\alpha = 1}^M \det(z_\alpha - X)
    & = \sum_{\lambda \subseteq (M^N)} (-1)^{|\lambda|} s_{\lambda^\vee} (Z) s_\lambda(X)
    \, , \\
    \prod_{\alpha = 1}^M \det(z_\alpha - X)^{-1}
    & = \det_M Z^{-N} \sum_{\lambda | \ell(\lambda) \le \operatorname{min}(M,N)} s_\lambda(Z^{-1}) s_\lambda(X)
    \, ,
\end{align}
\end{subequations}
where we define the dual partition
\begin{align}
    \lambda^\vee = (\lambda_1^\vee,\ldots,\lambda_M^\vee) = (N - \lambda_M^\text{T}, \ldots, N - \lambda_1^\text{T})
    \, ,
\end{align}
and the length of the partition denoted by $\ell(\lambda) = \lambda_1$.
\end{lemma}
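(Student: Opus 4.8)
The plan is to prove the two Cauchy-type identities separately, in each case starting from the Vandermonde-to-Schur conversion $\prod_{\alpha=1}^M \det(z_\alpha - X)^{\pm 1}$ in terms of a ratio of determinants and then recognizing the resulting sum as a Schur expansion via the dual Cauchy identity (for the first) and the ordinary Cauchy identity (for the second).

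For the first identity, I would write $\prod_{\alpha=1}^M \det(z_\alpha - X) = \prod_{\alpha=1}^M \prod_{i=1}^N (z_\alpha - x_i)$ and note this is (up to sign) the joint Vandermonde-like object that appears in the dual Cauchy identity. More precisely, the dual Cauchy identity states $\prod_{i,\alpha}(1 + x_i z_\alpha) = \sum_{\lambda \subseteq (M^N)} s_\lambda(X)\, s_{\lambda^\text{T}}(Z)$, where $\lambda$ runs over partitions fitting in an $N \times M$ box. The step I expect to be slightly delicate is the bookkeeping: converting $(z_\alpha - x_i)$ into $(1 + x_i(-z_\alpha^{-1}))$ style factors, pulling out the overall power $\prod z_\alpha^N$, using homogeneity of $s_{\lambda^\text{T}}$ of degree $|\lambda^\text{T}| = |\lambda|$ together with the complementation $s_{\lambda^\text{T}}(z_1^{-1},\ldots,z_M^{-1}) \cdot (\text{monomial}) = s_{\lambda^\vee}(Z)$ coming from the box-complement relation $\lambda^\vee_i = N - \lambda^\text{T}_{M+1-i}$, and tracking the sign $(-1)^{|\lambda|}$ that arises from the $N$ sign flips $z_\alpha \mapsto -z_\alpha$ distributed over the weight of $\lambda$. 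I would verify the sign and the reindexing on a small case (say $M = N = 1$, or $\lambda$ a single box) to pin down conventions.

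For the second identity, I would use the ordinary Cauchy identity $\prod_{i,\alpha}(1 - x_i w_\alpha)^{-1} = \sum_\lambda s_\lambda(X)\, s_\lambda(W)$, valid formally, where the sum is over all partitions but automatically truncates to $\ell(\lambda) \le N$ because $s_\lambda(X) = 0$ for $\ell(\lambda) > N$ ($X$ has $N$ variables) and to $\ell(\lambda) \le M$ because $s_\lambda(W) = 0$ for $\ell(\lambda) > M$; hence $\ell(\lambda) \le \min(M,N)$. Setting $w_\alpha = z_\alpha^{-1}$ gives $\prod_{i,\alpha}(1 - x_i/z_\alpha)^{-1} = \prod_{\alpha}\det(\id - X/z_\alpha)^{-1} = \prod_\alpha z_\alpha^N \det(z_\alpha - X)^{-1} = \det_M Z^N \prod_\alpha \det(z_\alpha - X)^{-1}$, so dividing by $\det_M Z^N$ yields exactly the claimed formula with the prefactor $\det_M Z^{-N}$ and $s_\lambda(Z^{-1})$.

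The main obstacle is not conceptual but notational: the paper's dual partition $\lambda^\vee$ mixes box-complementation with order reversal, and one must check carefully that the homogeneity degrees, the total sign, and the index reversal all align so that $(-1)^{|\lambda|} s_{\lambda^\vee}(Z)$ is precisely what the dual Cauchy identity produces after clearing the $\prod z_\alpha^N$ factor. I would present the argument as: (i) state the two Cauchy identities, (ii) perform the substitution and extract the monomial prefactor using homogeneity, (iii) match indices via $\lambda \leftrightarrow \lambda^\text{T} \leftrightarrow \lambda^\vee$ and confirm the sign, (iv) note the automatic truncation of the summation ranges from $s_\lambda$ vanishing when the number of variables is exceeded. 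Everything else is routine symmetric-function manipulation.
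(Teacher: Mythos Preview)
Your proposal is correct and is exactly the route the paper takes: the paper's entire proof is the single sentence ``This follows from the Cauchy sum formula'' with a reference to Macdonald, and your plan simply unpacks that citation by invoking the dual Cauchy identity for the product and the ordinary Cauchy identity for the inverse, together with the box-complement relation $(z_1\cdots z_M)^N s_{\lambda^{\mathrm T}}(Z^{-1})=s_{\lambda^\vee}(Z)$ and the automatic truncation $\ell(\lambda)\le\min(M,N)$. There is nothing to add; your bookkeeping concerns (the sign $(-1)^{|\lambda|}$ from $y_\alpha=-z_\alpha^{-1}$ and the reindexing $\lambda^{\mathrm T}\mapsto\lambda^\vee$) are the only content beyond the bare Cauchy identities, and you have identified them correctly.
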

\begin{proof}
This follows from the Cauchy sum formula.
See, e.g.,~\cite{Macdonald:2015}.
\end{proof}

\subsection{Characteristic polynomial}\label{sec:ch_poly_av}

Based on the Schur polynomial expansion, we obtain the determinantal formula for the characteristic polynomial average as follows.
\begin{proposition}[Characteristic polynomial average]
The $M$-point correlation function of the characteristic polynomial is given by a rank $M$ determinant of the associated biorthogonal polynomials,
\begin{subequations}
\begin{align}
    \expval{ \prod_{\alpha = 1}^M \det(z_\alpha - X_L) }
    & = 
    \frac{1}{\Delta_M(Z)} \det_{1 \le \alpha, \beta \le M }  P_{N+M-\beta}(z_\alpha) 
    \, , \label{eq:ch_poly_av_L} \\
    \expval{ \prod_{\alpha = 1}^M \det(z_\alpha - X_R) }
    & = 
    \frac{1}{\Delta_M(Z)} \det_{1 \le \alpha, \beta \le M } Q_{N+M-\beta}(z_\alpha)     
    \, .
    \label{eq:ch_poly_av_R}
\end{align}
\end{subequations}
\end{proposition}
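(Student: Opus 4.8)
The plan is to combine the Schur expansion of the characteristic polynomial (Lemma~\ref{lemma:Schur_expansion}) with the Schur polynomial average formula (Lemma~\ref{lemma:Schur_average}), and then recognize the resulting sum as a single determinant via the Cauchy--Binet identity. I will only spell out the argument for~\eqref{eq:ch_poly_av_L}, since~\eqref{eq:ch_poly_av_R} follows by the obvious $L \leftrightarrow R$ symmetry (replacing $p_i, P_i$ by $q_i, Q_i$).

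First I would write $\prod_{\alpha=1}^M \det(z_\alpha - X_L) = \sum_{\lambda \subseteq (M^N)} (-1)^{|\lambda|} s_{\lambda^\vee}(Z) s_\lambda(X_L)$ and insert this into the expectation value, using linearity and $\rvev{s_\lambda(X_L)} = \rvev{s_\lambda(X_L) s_\emptyset(X_R)}$. By Lemma~\ref{lemma:Schur_average} with $\mu = \emptyset$, this gives $\rvev{s_\lambda(X_L)} = Z_N^{-1} \det_{1\le i,j\le N} \cvev{x_L^{\lambda_i + N - i} \mid \omega \mid x_R^{N-j}} = Z_N^{-1}\det_{1\le i,j\le N}\mathsf{N}_{\lambda_i+N-i,\,N-j}$ after choosing the monic polynomials $p,q$ to be monomials $p_i(x)=x^i$, $q_i(x)=x^i$ (which is legitimate since the norm matrix and hence $Z_N$ and all averages are independent of the choice of monic basis; alternatively keep general $p,q$ and note the Schur average can be rewritten with $p_{\lambda_i+N-i}$ in row $i$). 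The key observation is that the shifted exponents $(\lambda_i + N - i)_{i=1,\ldots,N}$ as $\lambda$ ranges over partitions in $(M^N)$ are exactly all strictly decreasing sequences of $N$ integers drawn from $\{0,1,\ldots,N+M-1\}$; the dual-partition bookkeeping is precisely what makes $s_{\lambda^\vee}(Z)$ pair up with the "complementary" $M$-subset of $\{0,\ldots,N+M-1\}$.

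The central step is then a Cauchy--Binet / Laplace expansion. Consider the $(N+M)\times(N+M)$ matrix built from columns indexed by $a \in \{0,\ldots,N+M-1\}$: for the first $N$ "rows" one uses entries of the norm matrix (the monomial-paired inner products $\cvev{x_L^a \mid \omega \mid x_R^{N-j}}$, $j=1,\ldots,N$), and for the last $M$ "rows" one uses $z_\alpha^a$, $\alpha=1,\ldots,M$. Expanding the determinant of this combined matrix by Laplace along the split into the first $N$ columns-set and the last $M$, each $N$-subset $S$ contributes $\pm(\text{an }N\times N\text{ norm-matrix minor}) \times (\text{an }M\times M\text{ Vandermonde-type minor in }z_\alpha)$. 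Dividing by $Z_N \Delta_M(Z)$, the norm-matrix minor divided by $Z_N$ is exactly $\rvev{s_\lambda(X_L)}$ for the $\lambda$ corresponding to $S$, the Vandermonde minor divided by $\Delta_M(Z)$ produces $(-1)^{|\lambda|} s_{\lambda^\vee}(Z)$ (this is where the sign $(-1)^{|\lambda|}$ and the dual partition re-enter, via the standard bialternant expression for $s_{\lambda^\vee}$), and the Laplace sign works out to match. Hence the full sum equals $Z_N^{-1}\Delta_M(Z)^{-1}$ times the combined determinant. Finally, I perform column reductions on the combined determinant: the first $N$ rows together with column operations convert the monomial entries $\cvev{x_L^a\mid\omega\mid x_R^{N-j}}$ into the biorthogonal-polynomial structure, so that the first $N$ rows become (up to the factor $Z_N = \prod h_i$) an identity-like block, leaving $\det_{1\le\alpha,\beta\le M} P_{N+M-\beta}(z_\alpha)$. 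Concretely: replace the last $M$ rows' monomials $z_\alpha^a$ by $P_{N+M-\beta}(z_\alpha)$ (allowed since $P$ is monic of the right degree and lower-degree monomials are absorbed by column ops into the norm-block), after which the first $N$ columns decouple and contribute $\prod_{i=0}^{N-1} h_i = Z_N$.

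I expect the main obstacle to be the sign bookkeeping and the precise combinatorial identification in the Cauchy--Binet step — matching the Laplace expansion sign, the $(-1)^{|\lambda|}$, the reversal inherent in $\lambda^\vee = (N-\lambda_M^{\mathrm{T}},\ldots,N-\lambda_1^{\mathrm{T}})$, and the relabelling $\beta \mapsto N+M-\beta$ all at once. A clean way to sidestep brute-force index-chasing is to prove the identity first for the trivial coupling $\omega(x_L,x_R)=\delta(x_L-x_R)$ with $V_L = V_R$, where it reduces to the classical Heine-type formula $\rvev{\prod_\alpha \det(z_\alpha - X)} = \Delta_M(Z)^{-1}\det P_{N+M-\beta}(z_\alpha)$ for a single Hermitian matrix model, and then argue that both sides are determined by the same finite linear data (the norm matrix $\mathsf{N}$ and the monomials $z_\alpha^a$) so the general formula follows by the same algebra with $\mathsf{N}_{i,j} = h_i\delta_{ij}$ replaced by the general biorthogonal $h_{i}\delta_{ij}$. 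Everything else — linearity of the average, the monic-basis independence, and the Vandermonde determinant identity~\eqref{eq:Vandermonde_pq} — is routine.
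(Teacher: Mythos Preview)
Your approach is essentially the same as the paper's: expand via Lemma~\ref{lemma:Schur_expansion}, apply Lemma~\ref{lemma:Schur_average}, recognize the resulting sum as the Laplace (co-factor) expansion of a single $(N{+}M)\times(N{+}M)$ block determinant, then pass from monomials to the biorthogonal polynomials so that the $N\times N$ norm block becomes diagonal and peels off as $Z_N$. The sign bookkeeping you worry about is handled exactly as you describe (the paper simply writes down the block matrix without further comment); your proposed ``trivial coupling'' shortcut, however, is not a valid substitute --- knowing the identity for $\omega=\delta$ does not by itself pin down the general case, so you should carry out the Laplace step directly rather than rely on that heuristic.
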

\begin{proof}
We may use Lemma~\ref{lemma:Schur_average} and Lemma~\ref{lemma:Schur_expansion} to show this formula.
Considering the characteristic polynomial coupled with the matrix $X_{L}$, we obtain
\begin{align}
    \expval{ \prod_{\alpha = 1}^M \det(z_\alpha - X_L) }
    & = \sum_{\lambda \subseteq (M^N)} (-1)^{|\lambda|} s_{\lambda^\vee} (Z) \expval{s_\lambda(X_L)}
    \nonumber \\
    & = \frac{Z_N^{-1}}{\Delta_M(Z)} \sum_{\lambda \subseteq (M^N)} (-1)^{|\lambda|} \det_{1 \le \alpha, \beta \le M} z_\alpha^{\lambda_\beta^\vee + M - \beta}  \det_{1 \le i,j \le N} ( x_L^{\lambda_i + N - i} \mid \omega \mid q_{N-j} )
    \nonumber \\
    & = \frac{Z_N^{-1}}{\Delta_M(Z)} \det_{\substack{1 \le \alpha, \beta \le M \\ 1 \le i, j \le N}} 
    \begin{pmatrix}
    z_\alpha^{N+M-\beta} & z_\alpha^{N-j}\\
    ( x_L^{N+M-\beta} \mid \omega \mid q_{N-j} ) & ( x_L^{N-j} \mid \omega \mid q_{N-j} )
    \end{pmatrix}
    \nonumber \\
    & = \frac{Z_N^{-1}}{\Delta_M(Z)} \det_{\substack{1 \le \alpha, \beta \le M \\ 1 \le i, j \le N}} 
    \begin{pmatrix}
    p_{N+M-\beta}(z_\alpha) & p_{N-j}(z_\alpha) \\
    ( p_{N+M-\beta} \mid \omega \mid q_{N-i} ) & ( p_{N-j} \mid \omega \mid q_{N-i} )
    \end{pmatrix}
    \nonumber \\
    & = \frac{Z_N^{-1}}{\Delta_M(Z)} \det_{\substack{1 \le \alpha, \beta \le M \\ 1 \le i, j \le N}} 
    \begin{pmatrix}
    P_{N+M-\beta}(z_\alpha) & P_{N-j}(z_\alpha) \\
    ( P_{N+M-\beta} \mid \omega \mid Q_{N-i} ) & ( P_{N-j} \mid \omega \mid Q_{N-i} )
    \end{pmatrix}    
    \nonumber \\
    & = \frac{Z_N^{-1}}{\Delta_M(Z)} \det_{\substack{1 \le \alpha, \beta \le M \\ 1 \le i, j \le N}} 
    \begin{pmatrix}
    P_{N+M-\beta}(z_\alpha) & P_{N-j}(z_\alpha) \\
    0 & h_{N-i} \, \delta_{N-i,N-j}
    \end{pmatrix}   
    \nonumber \\
    & = \frac{1}{\Delta_M(Z)} \det_{1 \le \alpha, \beta \le M }  P_{N+M-\beta}(z_\alpha) 
    \, ,
\end{align}
where we apply the rank $M$ co-factor expansion of the rank $N+M$ determinant.
The other formula~\eqref{eq:ch_poly_av_R} is similarly obtained.
\end{proof}

\if0
\begin{remark}[Coupled with $X_{k \neq L, R}$]
We can also calculate the characteristic polynomial for the matrix $X_{k \neq 1, L}$.
In this case, insertion of the characteristic polynomial is simply incorporated by modification of the potential,
\begin{align}
    \ee^{- \tr V_k(X_k)} \prod_{\alpha = 1}^M \det(z_\alpha - X_k)^{\pm 1}
    = \ee^{- \tr \widetilde{V}_k(X_k)}
\end{align}
where
\begin{align}
    \widetilde{V}_k(x) = V_k(x) \mp \sum_{\alpha = 1}^M \log (z_\alpha - x)
    \, .
\end{align}
\rem{To be updated}
\end{remark}
\fi

\subsection{Characteristic polynomial inverse}\label{sec:ch_poly_inv_av}

In order to write down the characteristic polynomial inverse average, we define the Hilbert transform.
\begin{definition}[Hilbert transform]
We define the Hilbert transform of the polynomial functions as follows,
\begin{subequations}
\begin{align}
    \widetilde{p}_{j}(z)
    = \int \prod_{k=L,R} \dd{x}_{k} \ee^{-V_k(x_k)} \frac{\omega(x_L,x_R) q_{j}(x_R) }{z - x_L}
    \, , \\
    \widetilde{q}_{j}(z)
    = \int \prod_{k=L,R} \dd{x}_{k} \ee^{-V_k(x_k)} \frac{p_{j}(x_L) \omega(x_L,x_R)}{z - x_R}
    \, .
\end{align}
\end{subequations}
\end{definition}

We obtain the following formula.
\begin{proposition}[Characteristic polynomial inverse average]
Let $Z = \diag(z_1,\ldots,z_M)$.
The $M$-point correlation function of the characteristic polynomial inverse is given by a rank $M$ determinant of the dual biorthogonal polynomials.
Depending on the relation between $N$ and $M$, we have the following formulas.
\begin{enumerate}
    \item $M \le N$
    \begin{subequations}
    \begin{align}
    \expval{ \prod_{\alpha = 1}^M \det(z_\alpha - X_L)^{-1} }
    & = \frac{Z_{N-M}/Z_N}{\Delta_M(Z)} 
    \det_{1 \le \alpha, \beta \le M} \widetilde{P}_{N-\beta}(z_\alpha)
    \, , \label{eq:ch_poly_av_inv_L} \\[.5em]
    \expval{ \prod_{\alpha = 1}^M \det(z_\alpha - X_R)^{-1} }
    & = \frac{Z_{N-M}/Z_N}{\Delta_M(Z)} 
    \det_{1 \le \alpha, \beta \le M} \widetilde{Q}_{N-\beta}(z_\alpha)    
    \, .
    \label{eq:ch_poly_av_inv_R}
    \end{align}
    \item $M \ge N$
    \begin{align}
    \expval{ \prod_{\alpha = 1}^M \det(z_\alpha - X_L)^{-1} }
    & = \frac{Z_N^{-1}}{\Delta_N(Z)}
    \det_{\substack{i=1,\ldots,N \\ \alpha = 1,\ldots,M \\ a = 1,\ldots,M-N}}
    \begin{pmatrix}
    \widetilde{p}_{N-i}(z_\alpha) \\ p_{a-1}(z_\alpha)
    \end{pmatrix}    
    \, , \label{eq:ch_poly_av_inv2_L} \\[.5em]
    \expval{ \prod_{\alpha = 1}^M \det(z_\alpha - X_R)^{-1} }
    & = \frac{Z_N^{-1}}{\Delta_N(Z)}
    \det_{\substack{i=1,\ldots,N \\ \alpha = 1,\ldots,M \\ a = 1,\ldots,M-N}}
    \begin{pmatrix}
    \widetilde{q}_{N-i}(z_\alpha) \\ q_{a-1}(z_\alpha)
    \end{pmatrix}    
    \, .
    \label{eq:ch_poly_av_inv2_R}
    \end{align}
    \end{subequations}
\end{enumerate}
\end{proposition}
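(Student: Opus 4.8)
The plan is to mimic the strategy used for the characteristic polynomial average in Section~\ref{sec:ch_poly_av}, now feeding in the second (inverse) expansion of Lemma~\ref{lemma:Schur_expansion} together with the Schur average of Lemma~\ref{lemma:Schur_average}. Concentrating on the $X_L$ case (the $X_R$ case being identical with $p \leftrightarrow q$), I would first write
\begin{align}
    \expval{ \prod_{\alpha=1}^M \det(z_\alpha - X_L)^{-1} }
    = \det{}_M Z^{-N} \sum_{\ell(\lambda) \le \min(M,N)} s_\lambda(Z^{-1}) \expval{ s_\lambda(X_L) }
    \, ,
\end{align}
then substitute $\expval{s_\lambda(X_L)} = Z_N^{-1}\det_{1\le i,j\le N}\cvev{ x_L^{\lambda_i+N-i} \mid \omega \mid q_{N-j}}$ and $s_\lambda(Z^{-1}) = \Delta_M(Z)^{-1}\det_{1\le\alpha,\beta\le M} z_\alpha^{-(\lambda_\beta + M - \beta)}$ from~\eqref{eq:Schur_def}. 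The key device, exactly as in the proof of~\eqref{eq:ch_poly_av_L}, is to recognize the product of two determinants summed over all $\lambda$ with $\ell(\lambda)\le\min(M,N)$ as a single $(N+M)\times(N+M)$ (or appropriately sized) determinant via the Cauchy--Binet / Andréief-type expansion, so that the sum over partitions collapses.

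The two cases $M\le N$ and $M\ge N$ arise because the constraint $\ell(\lambda)\le\min(M,N)$ cuts off the partition in different ways, and the combined determinant has a different block shape accordingly. For $M\le N$: after folding the Schur sum into one determinant one gets a block matrix whose lower-right $N\times N$ block is the norm matrix (in biorthogonal form, $h_{N-i}\delta_{i,j}$ after passing to $P_i,Q_j$), but because the inverse powers $z_\alpha^{-1}$ interact with $M$ columns, the surviving block is $(N-M)\times(N-M)$ rather than $N\times N$. Extracting that block contributes precisely $Z_{N-M} = \prod_{i=0}^{N-M-1} h_i$, which explains the prefactor $Z_{N-M}/Z_N$; the remaining $M\times M$ block assembles, row by row, into $\det_{1\le\alpha,\beta\le M}\widetilde{P}_{N-\beta}(z_\alpha)$ once one identifies the sum $\sum_j z_\alpha^{-1}\cdot(\text{powers})\cdot\cvev{\cdots\mid\omega\mid q_{N-j}}\cdots$ with the geometric-series representation $1/(z-x_L) = \sum_{k\ge 0} x_L^k/z^{k+1}$ of the Hilbert transform $\widetilde{p}_{N-\beta}$, then upgrading $\widetilde p$ to $\widetilde P$ by the same linear-combination argument used before. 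For $M\ge N$: now $\ell(\lambda)\le N$ is the binding constraint, only $N$ of the $M$ inverse variables can be "used up," and the leftover $M-N$ columns of the $Z$-determinant survive untransformed, producing the block with rows $p_{a-1}(z_\alpha)$ for $a=1,\ldots,M-N$ stacked above the $N$ rows $\widetilde{p}_{N-i}(z_\alpha)$, with overall normalization $Z_N^{-1}/\Delta_N(Z)$ (note the Vandermonde is now over all $M$ variables only through the combined $\Delta_N(Z)$ times the polynomial rows).

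The main obstacle I expect is bookkeeping the index ranges and signs when collapsing the doubled determinant sum into a single determinant in each regime — in particular, checking that the negative exponents $z_\alpha^{-(\lambda_\beta+M-\beta)}$ reorganize correctly so that the finite geometric sums telescope into the Hilbert transforms $\widetilde p_j$ (this requires the integrals defining $\widetilde p_j$ to converge as formal series, which we treat formally as stated in the opening remark), and that the $h$-product split off from the biorthogonal block matches $Z_{N-M}$ with the right offset. A secondary subtlety is justifying the passage from the monic polynomials $p_i,q_i$ to the biorthogonal $P_i,Q_i$: as in the earlier proof this is just invariance of the determinant under row/column operations, but one must verify the Hilbert transform is compatible, i.e.\ that $\widetilde p_j$ transforms into $\widetilde P_j$ under the same triangular change of basis — which follows since the transform is linear in $q_j$ (resp.\ in the polynomial multiplying $1/(z-x_L)$) and the $z_\alpha$ dependence is untouched. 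Once the block structure is pinned down, the final step (co-factor expansion of the $(N+M-\min)$-sized determinant down to the advertised rank-$M$ or block form) is formally identical to the manipulation already carried out for~\eqref{eq:ch_poly_av_L}.
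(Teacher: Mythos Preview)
Your overall strategy matches the paper's: Schur expansion from Lemma~\ref{lemma:Schur_expansion} combined with the Schur average from Lemma~\ref{lemma:Schur_average}, collapse via an Andr\'eief-type identity, then the geometric series $\sum_{r\ge 0} z^{-r-1}x^r = 1/(z-x)$ to produce the Hilbert transform. One point needs correction, though: the intermediate $(N+M)\times(N+M)$ block determinant you invoke does not appear in the inverse case. That structure is specific to the \emph{positive} average, where the partition sum $\lambda\subseteq(M^N)$ is finite and matches a cofactor expansion. For the inverse, the sum over $\ell(\lambda)\le M$ is infinite in the parts $\lambda_i$, so the paper proceeds in the opposite order: for $M\le N$ it first uses biorthogonality inside the $N\times N$ determinant for $\expval{s_\lambda(X_L)}$ to reduce it to an $M\times M$ determinant $\det_{\alpha,\beta}\cvev{x_L^{\lambda_\alpha+N-\alpha}\mid\omega\mid Q_{N-\beta}}$ (this is where $Z_{N-M}$ splits off), and only \emph{then} symmetrizes $\lambda_\alpha+N-\alpha\mapsto r_\alpha$ and applies the discrete Andr\'eief identity to collapse the infinite $\lambda$-sum into a single $M\times M$ determinant whose entries are the geometric series. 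There is also one extra biorthogonality step you should anticipate: the resulting series is $\sum_{r\ge 0} z_\alpha^{M-N-r-1}\cvev{x_L^{N-M+r}\mid\omega\mid Q_{N-\beta}}$, and to identify this with $\widetilde P_{N-\beta}(z_\alpha)$ one uses $\cvev{x_L^{a}\mid\omega\mid Q_{N-\beta}}=0$ for $a\le N-M-1$ to absorb the offset $x_L^{N-M}$ into the kernel $1/(z_\alpha-x_L)$. The $M\ge N$ case in your sketch is essentially correct and follows the paper.
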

\begin{proof}
We first consider the case $M \le N$.
In this case, the Schur polynomial average for $\ell(\lambda) \le M$ is obtained from Lemma~\ref{lemma:Schur_average} as
\begin{align}
    \expval{s_\lambda(X_L)}
    & = \frac{1}{Z_N}
    \det_{\substack{1 \le \alpha, \beta \le M \\ M+1 \le a, b \le N}} 
    \begin{pmatrix}
    ( x_L^{\lambda_\alpha + N - \alpha} \mid \omega \mid q_{N-\beta} ) & ( x_L^{N - a} \mid \omega \mid q_{N-\beta} ) \\[.5em]
    ( x_L^{\lambda_\alpha + N - \alpha} \mid \omega \mid q_{N-b} ) & ( x_L^{N - a} \mid \omega \mid q_{N-b} ) \\
    \end{pmatrix}
    \nonumber \\
    & = \frac{1}{Z_N}
    \det_{\substack{1 \le \alpha, \beta \le M \\ M+1 \le a, b \le N}} 
    \begin{pmatrix}
    ( x_L^{\lambda_\alpha + N - \alpha} \mid \omega \mid Q_{N-\beta} ) & 0 \\[.5em]
    ( x_L^{\lambda_\alpha + N - \alpha} \mid \omega \mid Q_{N-b} ) & h_{N-a} \, \delta_{N-a, N-b} \\
    \end{pmatrix}   
    \nonumber \\
    & = \frac{Z_{N-M}}{Z_N} \det_{1 \le \alpha, \beta \le M} ( x_L^{\lambda_\alpha + N - \alpha} \mid \omega \mid Q_{N-\beta} )
    \, .
\end{align}
Then, applying the Schur polynomial expansion as given in Lemma~\ref{lemma:Schur_expansion}, the characteristic polynomial inverse average is given as follows,
\begin{align}
    & \expval{ \prod_{\alpha = 1}^M \det(z_\alpha - X_L)^{-1} }
    \nonumber \\
    & = \det_M Z^{-N} \sum_{\ell(\lambda) \le M} s_{\lambda} (Z^{-1}) \expval{s_\lambda(X_L)}
    \nonumber \\    
    & = \frac{Z_{N-M}}{Z_N} \frac{1}{\Delta_M(Z)}
    \sum_{0 \le \lambda_M \le \cdots \le \lambda_1 \le \infty} 
    \det_{1 \le \alpha, \beta \le M} \qty( z_\alpha^{-\lambda_\beta + \beta - (N+1)} )
    \det_{1 \le \alpha, \beta \le M} ( x_L^{\lambda_\alpha + N - \alpha} \mid \omega \mid Q_{N-\beta} )
    \nonumber \\    
    & = \frac{Z_{N-M}}{Z_N} \frac{1}{\Delta_M(Z)} \frac{1}{M!}
    \sum_{\substack{0 \le r_1, \cdots, r_M \le \infty \\ r_\alpha \neq r_\beta}} 
    \det_{1 \le \alpha, \beta \le M} \qty( z_\alpha^{M - N - r_\beta - 1} )
    \det_{1 \le \alpha, \beta \le M} ( x_L^{N - M + r_\alpha} \mid \omega \mid Q_{N-\beta} )
    \nonumber \\    
    & = \frac{Z_{N-M}}{Z_N} \frac{1}{\Delta_M(Z)} 
    \det_{1 \le \alpha, \beta \le M} \qty( \sum_{r=0}^\infty z_\alpha^{M-N-r-1} ( x_L^{N - M + r} \mid \omega \mid Q_{N-\beta} ) )
    \, ,
\end{align}
where we have applied an analog of the AH identity for non-colliding discrete variables, $(r_\alpha)_{\alpha = 1,\ldots,M}$ ($r_\alpha \neq r_\beta$).
Noticing
\begin{align}
    \sum_{r=0}^\infty z^{-r-1} x^r = \frac{1}{z-x}
    \, ,
\end{align}
and
\begin{align}
    \frac{x^{N-M}}{z-x} 
    = \frac{z^{N-M}}{z - x} - \frac{z^{N-M} - x^{N-M}}{z - x} 
    = \frac{z^{N-M}}{z - x} - O(x^{N-M-1})
    \, ,
\end{align}
we obtain 
\begin{align}
    & \sum_{r=0}^\infty z_\alpha^{M-N-r-1} ( x_L^{N - M + r} \mid \omega \mid Q_{N-\beta} )
    \nonumber \\
    & = 
    z_\alpha^{M-N} \int \prod_{k=L,R} \dd{x}_{k} \ee^{-V_k(x_k)} \frac{x_L^{N-M}}{z_\alpha - x_L} \omega(x_L,x_R) Q_{N-\beta}(x_R)
    \nonumber \\
    & = 
    \int \prod_{k=L,R} \dd{x}_{k} \ee^{-V_k(x_k)} \frac{\omega(x_L,x_R) Q_{N-\beta}(x_R) }{z_\alpha - x_L} 
    \nonumber \\
    & = \widetilde{P}_{N-\beta}(z_\alpha) 
    \, .
\end{align}
We have used the biorthogonality $\cvev{ x_L^{a} \mid \omega \mid Q_{N - \beta}} = 0$ for $\beta = 1, \ldots,M$ and $a = 0,\ldots,N-M-1$ to obtain the last expression. 
This completes the derivation of the formula \eqref{eq:ch_poly_av_inv_L}.
We can similarly obtain the formula \eqref{eq:ch_poly_av_inv_R}.

We then consider the case $M \ge N$. 
In this case, the $M$-variable Schur polynomial with the condition $\ell(\lambda) \le N$ for $Z = \diag(z_1,\ldots,z_M)$ is given by
\begin{align}
    \frac{s_\lambda(Z^{-1})}{\det Z^N} 
    & = 
    \det_{\substack{i=1,\ldots,N \\ \alpha = 1,\ldots,M \\ a = 1,\ldots,M-N}}
    \begin{pmatrix}
    z_\alpha^{-\lambda_i + i - (N+1)} \\ z_\alpha^{a-1}
    \end{pmatrix}
    & = 
    \det_{\substack{i=1,\ldots,N \\ \alpha = 1,\ldots,M \\ a = 1,\ldots,M-N}}
    \begin{pmatrix}
    z_\alpha^{-\lambda_i + i - (N+1)} \\ p_{a-1}(z_\alpha)
    \end{pmatrix}
    \, .
\end{align}
Hence, applying the Schur polynomial expansion, we obtain
\begin{align}
    & \expval{ \prod_{\alpha = 1}^M \det(z_\alpha - X_L)^{-1} }
    \nonumber \\
    & = \frac{Z_N^{-1}}{\Delta_N(Z)}
    \sum_{0 \le \lambda_N \le \cdots \le \lambda_1 \le \infty}
    \det_{\substack{i=1,\ldots,N \\ \alpha = 1,\ldots,M \\ a = 1,\ldots,M-N}}
    \begin{pmatrix}
    z_\alpha^{-\lambda_i + i - (N+1)} \\ p_{a-1}(z_\alpha)
    \end{pmatrix}
    \det_{1 \le i,j \le N} ( x_L^{\mu_i + N - i} \mid \omega \mid q_{N - j} )
    \nonumber \\
    & = 
    \frac{Z_N^{-1}}{\Delta_N(Z)}
    \det_{\substack{i=1,\ldots,N \\ \alpha = 1,\ldots,M \\ a = 1,\ldots,M-N}}
    \begin{pmatrix}
    \displaystyle \sum_{r=0}^\infty
    z_\alpha^{-r-1} ( x_L^{r} \mid \omega \mid q_{N - i} ) \\ p_{a-1}(z_\alpha)
    \end{pmatrix}
    \nonumber \\
    & = 
    \frac{Z_N^{-1}}{\Delta_N(Z)}
    \det_{\substack{i=1,\ldots,N \\ \alpha = 1,\ldots,M \\ a = 1,\ldots,M-N}}
    \begin{pmatrix}
    \displaystyle \sum_{r=0}^\infty
    z_\alpha^{-r-1} ( x_L^{r} \mid \omega \mid q_{N - i} ) \\ p_{a-1}(z_\alpha)
    \end{pmatrix}
    \nonumber \\
    & = 
    \frac{Z_N^{-1}}{\Delta_N(Z)}
    \det_{\substack{i=1,\ldots,N \\ \alpha = 1,\ldots,M \\ a = 1,\ldots,M-N}}
    \begin{pmatrix}
    \widetilde{p}_{N-i}(z_\alpha) \\ p_{a-1}(z_\alpha)
    \end{pmatrix}    
    \, .
\end{align}
This is the determinantal formula shown in~\eqref{eq:ch_poly_av_inv2_L}.
We can similarly obtain the other formula \eqref{eq:ch_poly_av_inv2_R}.
This completes the proof.
\end{proof}

\newpage

\section{Pair correlation functions}\label{sec:pair_corr}

In this Section, we consider the correlation function of both of the characteristic polynomials coupled to the matrices $X_{L,R}$, that we call the pair correlation function.

\subsection{Characteristic polynomial}

We have the following result regarding the pair correlation of the characteristic polynomials.

\begin{proposition}[Pair correlation of characteristic polynomials]\label{thm:pair}
Let $Z = \diag(z_1,\ldots,z_M)$ and $W = \diag(w_1,\ldots,w_M)$.
The correlation function of $M$ pairs of the characteristic polynomials is given by a rank $M$ determinant of the CD kernel,
\begin{align}
    \expval{ \prod_{\alpha = 1}^M \det(z_\alpha - X_L) \det(w_\alpha - X_R) }
    & =
    \frac{\ee^{\tr V_L(Z) + \tr V_R(W)}}{\Delta_M(Z) \Delta_M(W)}
    \frac{Z_{N+M}}{Z_N}
    \det_{1 \le \alpha, \beta \le M} 
    K_{N+M}(w_\alpha,z_\beta) 
    \, .
\end{align}
\end{proposition}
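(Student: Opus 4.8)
The plan is to treat both characteristic–polynomial factors symmetrically by the Schur expansion, mirroring the derivation of~\eqref{eq:ch_poly_av_L} but now opening up the $X_L$- and the $X_R$-sides at once; the two resulting sums over partitions will collapse into a single bordered determinant of size $N+2M$, whose Schur complement is precisely the Christoffel--Darboux kernel $K_{N+M}$.

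First I would apply the Schur expansion of Lemma~\ref{lemma:Schur_expansion} to each of $\prod_{\alpha}\det(z_\alpha-X_L)$ and $\prod_{\alpha}\det(w_\alpha-X_R)$, and then insert the two–variable Schur average of Lemma~\ref{lemma:Schur_average}. Expressing $s_{\lambda^\vee}(Z)$ and $s_{\mu^\vee}(W)$ through their determinantal definitions, this rewrites the left-hand side as $\frac{Z_N^{-1}}{\Delta_M(Z)\Delta_M(W)}$ times a double sum over $\lambda,\mu\subseteq(M^N)$ of $(-1)^{|\lambda|+|\mu|}$ times the product of the three determinants $\det_{\alpha,\beta}z_\alpha^{\lambda_\beta^\vee+M-\beta}$, $\det_{\alpha,\beta}w_\alpha^{\mu_\beta^\vee+M-\beta}$ and $\det_{i,j}\cvev{x_L^{\lambda_i+N-i}\mid\omega\mid x_R^{\mu_j+N-j}}$.

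The combinatorial input is that, for $\lambda\subseteq(M^N)$, the sets $\{\lambda_i+N-i\}_{i=1}^N$ and $\{\lambda_\beta^\vee+M-\beta\}_{\beta=1}^M$ form a pair of complementary subsets of $\{0,1,\dots,N+M-1\}$, the latter automatically listed in strictly decreasing order since $\lambda^\vee$ is itself a partition, and that $(-1)^{|\lambda|}=(-1)^{\binom N2}(-1)^{\sum_i(\lambda_i+N-i)}$, and similarly for $\mu$. Hence the double sum becomes a sum over two independent pairs of complementary subsets of $\{0,\dots,N+M-1\}$, which I would identify — after trading the monomials $x_L^i,x_R^j$ for the monic polynomials $p_i,q_j$ via unitriangular row and column operations, which act simultaneously on the bordering entries and leave the determinant unchanged — with the generalized Laplace (cofactor) expansion, first along $M$ rows and then along $M$ columns, of the bordered $(N+2M)\times(N+2M)$ determinant of
\begin{align}
    \mathcal{E}=\begin{pmatrix}\mathsf{N}_{i,j} & p_i(z_\beta)\\[.3em] q_j(w_\alpha) & 0\end{pmatrix}_{\substack{0\le i,j\le N+M-1\\1\le\alpha,\beta\le M}}\,.
\end{align}
The two Laplace signs together with $(-1)^{|\lambda|+|\mu|}$ should combine to a single overall factor $(-1)^M$, so that the double sum equals $(-1)^M\det\mathcal{E}$. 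This sign bookkeeping is the step I expect to be the main obstacle, although it reduces to the standard sign rule for a block Laplace expansion.

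Finally I would evaluate $\det\mathcal{E}$ by the Schur–complement identity for a block matrix with a vanishing lower–right block,
\begin{align}
    \det\mathcal{E}=(-1)^M\,\det_{0\le i,j\le N+M-1}\mathsf{N}_{i,j}\cdot\det_{1\le\alpha,\beta\le M}\qty(\sum_{i,j=0}^{N+M-1}q_i(w_\alpha)\left(\mathsf{N}^{-1}\right)_{i,j}p_j(z_\beta))\,.
\end{align}
The first determinant equals $Z_{N+M}$ by the determinantal formula for the partition function applied with $N\to N+M$ (reversing both index sets does not change the sign), and the entries of the second matrix are precisely $\ee^{V_L(z_\beta)+V_R(w_\alpha)}K_{N+M}(w_\alpha,z_\beta)$ by the definition~\eqref{eq:CD_def} of the CD kernel with $N\to N+M$. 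Collecting the pieces, the two factors $(-1)^M$ cancel, the factor $\Delta_M(Z)^{-1}\Delta_M(W)^{-1}$ comes from the Schur determinants and $Z_N^{-1}$ from Lemma~\ref{lemma:Schur_average}, and one arrives at the stated rank-$M$ determinant of $K_{N+M}$ with the prefactor $\ee^{\tr V_L(Z)+\tr V_R(W)}Z_{N+M}/(Z_N\Delta_M(Z)\Delta_M(W))$.
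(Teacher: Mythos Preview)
Your proposal is correct and follows essentially the same route as the paper's proof: Schur-expand on both sides, insert the two-variable Schur average, reassemble the double sum over partitions into a single bordered $(N+2M)\times(N+2M)$ determinant via the generalized Laplace/cofactor expansion, and then reduce to a rank-$M$ determinant by the Schur complement, identifying the entries with $K_{N+M}$. The only cosmetic differences are that the paper places the zero block in the upper-left (with monomials first, then passes to $p_i,q_j$) whereas you place it in the lower-right, and the paper compresses your complementary-subsets/sign discussion into the single phrase ``apply the co-factor expansion twice.''
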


\begin{proof}
We use Lemma~\ref{lemma:Schur_average} and Lemma~\ref{lemma:Schur_expansion} as before.
In addition, we apply the co-factor expansion twice to obtain the following,
\begin{align}
    & \expval{ \prod_{\alpha = 1}^M \det(z_\alpha - X_L) \det(w_\alpha - X_R) }
    \nonumber \\
    & =
    \frac{Z_N^{-1}}{\Delta_M(Z) \Delta_M(W)}
    \det_{ \substack{1 \le \alpha, \beta \le M \\ 1 \le i, j \le N + M} }
    \begin{pmatrix}
    0 & w_\alpha^{N + M - j} \\
    z_\beta^{N + M - i} & ( x_L^{N + M - i} \mid \omega \mid x_R^{N + M - j} )
    \end{pmatrix}
    \nonumber \\
    & =
    \frac{Z_N^{-1}}{\Delta_M(Z) \Delta_M(W)}
    \det_{ \substack{1 \le \alpha, \beta \le M \\ 1 \le i, j \le N + M} }
    \begin{pmatrix}
    0 & q_{N + M - j}(w_\alpha) \\
    p_{N + M - i}(z_\beta) & \mathsf{N}_{N+M-i,N+M-j}
    \end{pmatrix}    
    \nonumber \\
    & =
    \frac{Z_{N+M}/Z_N}{\Delta_M(Z) \Delta_M(W)}
    \det_{1 \le \alpha, \beta \le M} 
    \qty(
    \sum_{k,k'=0}^{N+M-1} q_{k}(w_\alpha) (\mathsf{N}^{-1})_{k,k'} p_{k'}(z_\beta)
    )
    \nonumber \\
    & =
    \frac{\ee^{\tr V_L(Z) + \tr V_R(W)}}{\Delta_M(Z) \Delta_M(W)}
    \frac{Z_{N+M}}{Z_N}
    \det_{1 \le \alpha, \beta \le M} 
    K_{N+M}(w_\alpha,z_\beta) 
    \, ,
\end{align}
We have applied the definition of the CD kernel of degree $N+M$~\eqref{eq:CD_def} to obtain the last expression.
\end{proof}
\begin{remark}
This result can be also obtained using the self-reproducing property of the CD kernel as follows.
Noticing
\begin{align}
    \Delta_N(X) \prod_{\alpha = 1}^M \det(z_\alpha - X) = \frac{\Delta_{N+M}(X;Z)}{\Delta_M(Z)}
    \, ,
\end{align}
the pair correlation is given by
\begin{align}
    & \expval{ \prod_{\alpha = 1}^M \det(z_\alpha - X_L) \det(w_\alpha - X_R) }
    \nonumber \\
    & = \frac{Z_N^{-1}}{\Delta_M(Z) \Delta_M(W)} \frac{1}{N!^2}
    \int \prod_{k=L,R} \dd{X}_k \ee^{-\tr V_k(X_k)} \Delta_{N+M}(X_L;Z) \det_{1 \le i, j \le N} \omega(x_{L,i},x_{R,j}) \Delta_{N+M}(X_R;W)
    \nonumber \\
    & = \frac{\ee^{\tr V_L(Z) + \tr V_R(W)}}{\Delta_M(Z) \Delta_M(W)} \frac{Z_{N+M}/Z_N}{N!^2}
    \nonumber \\
    & \hspace{1em} \times
    \int \prod_{k=L,R} \dd{X}_k \ee^{-\tr V_k(X_k)} \det_{1 \le i, j \le N} \omega(x_{L,i},x_{R,j}) \det_{\substack{1 \le i, j \le N \\ 1 \le \alpha, \beta \le M}} 
    \begin{pmatrix}
    K_{N+M}(x_{R,i},x_{L,j}) & K_{N+M}(x_{R,i},z_{\beta}) \\
    K_{N+M}(w_{\alpha},x_{L,j}) & K_{N+M}(w_{\alpha},z_\beta)
    \end{pmatrix}
    \nonumber \\
    & =
    \frac{\ee^{\tr V_L(Z) + \tr V_R(W)}}{\Delta_M(Z) \Delta_M(W)}
    \frac{Z_{N+M}}{Z_N}
    \det_{1 \le \alpha, \beta \le M} 
    K_{N+M}(w_\alpha,z_\beta) 
    \, .
\end{align}
\end{remark}

\subsection{Characteristic polynomial inverse}

We then consider the pair correlation of the characteristic polynomial inverses.
In order to write down the formula in this case, we define the dual CD kernel as follows.
\begin{definition}[Dual Christoffel--Darboux kernel]
For the dual wave functions defined through the Hilbert transform,
\begin{subequations}
\begin{align}
    \widetilde{\phi}_i(z) & = \ee^{V_L(z)} \int \dd{x}_{L,R} \ee^{-V_L(x_L)} \frac{\omega(x_L, x_R) \psi_i(x_R)}{z - x_L}
    \, , \\
    \widetilde{\psi}_i(z) & = \ee^{V_R(z)} \int \dd{x}_{L,R} \ee^{-V_R(x_R)} \frac{\phi_i(x_L)\omega(x_L, x_R) }{z - x_R}
    \, ,
\end{align}
\end{subequations}
we define the dual Christoffel--Darboux kernel of degree $N$ as follows,
\begin{align}
    \widetilde{K}_{N}(w,z) = \sum_{i = N}^\infty \widetilde{\psi}_i(w) \widetilde{\phi}_i(z)
    \, .
\end{align}
\end{definition}
\begin{proposition}[Pair correlation of characteristic polynomial inverses]
Let $Z = \diag(z_1,\ldots,z_M)$ and $W = \diag(w_1,\ldots,w_M)$.
The correlation function of $M$ pairs of the characteristic polynomial inverses is given by a rank $M$ determinant of the dual CD kernel depending on the relation between $N$ and $M$ as follows.
\begin{enumerate}
    \item $M \le N$
    \begin{subequations}
    \begin{align}
    \hspace{-1.5em}
    & \expval{ \prod_{\alpha = 1}^M \det(z_\alpha - X_L)^{-1} \det(w_\alpha - X_R)^{-1} }
    =
    \frac{\ee^{-\operatorname{tr}V_L(Z)}\ee^{-\operatorname{tr}V_R(W)}}{\Delta_M(Z) \Delta_M(W)} \frac{Z_{N-M}}{Z_N}
    \det_{1 \le \alpha, \beta \le M}
    \widetilde{K}_{N-M}(w_\beta,z_\alpha) 
    \label{eq:ch_poly_pair_inv1} 
    \end{align}
    \item $M \ge N$
    \begin{align}
    & \expval{ \prod_{\alpha = 1}^M \det(z_\alpha - X_L)^{-1} \det(w_\alpha - X_R)^{-1} }
    \nonumber \\
    & =
     \frac{(-1)^{M-N} Z_N^{-1}}{\Delta_M(Z) \Delta_M(W)}
    \det_{1 \le \alpha, \beta \le M} \qty( \frac{1}{z_\alpha - x_L} \mid \omega \mid \frac{1}{w_\beta - x_R})
    \det_{1 \le a, b \le M - N}
    \qty( \sum_{\alpha,\beta = 1}^M p_{a-1}(z_\alpha) \widetilde{\omega}_{\alpha,\beta} q_{b-1}(w_\beta)  )
    \label{eq:ch_poly_pair_inv2} 
    \end{align}    
    \end{subequations}
    where $\widetilde{\omega}_{\alpha,\beta}$ is the inverse of $\qty( \frac{1}{z_\alpha - x_L} \mid \omega \mid \frac{1}{w_\beta - x_R})$.
\end{enumerate}
\end{proposition}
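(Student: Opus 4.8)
Below is how I would attack this statement.

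The plan is to convert each inverse characteristic polynomial into a single determinant \emph{before} integrating --- using the classical mixed Cauchy--Vandermonde (``partial Cauchy'') determinant identity --- and then to collapse the resulting product of determinants with the Andréief--Heine identity (Lemma~\ref{lemma:AH_id}), much as in the remark following Proposition~\ref{thm:pair}. The two cases appear because this identity takes a different shape for $M\le N$ and for $M\ge N$. (Equivalently one could expand both inverse characteristic polynomials by Lemma~\ref{lemma:Schur_expansion}, insert Lemma~\ref{lemma:Schur_average}, and carry out the ensuing double partition sum with the discrete Andréief--Heine identity and $\sum_{r\ge0}z^{-r-1}x^{r}=(z-x)^{-1}$ exactly as in Section~\ref{sec:ch_poly_inv_av}; this reaches the same determinants.)

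\textbf{Case $M\le N$.} I would first write $\Delta_N(X_L)\prod_\alpha\det(z_\alpha-X_L)^{-1}=\pm\,\Delta_M(Z)^{-1}\det_{1\le i,j\le N}\mathsf M^{(L)}_{ij}$, where the first $M$ columns of $\mathsf M^{(L)}$ are the Cauchy columns $1/(z_j-x_{L,i})$ and the last $N-M$ are $p_{j-M-1}(x_{L,i})$ (the monomials $x_{L,i}^{j-M-1}$ made monic by column operations), and likewise for $X_R,W,q$. Substituting this into the expectation value, absorbing $\ee^{-\tr V_L(X_L)}$ and $\ee^{-\tr V_R(X_R)}$ rowwise into these determinants, and applying Lemma~\ref{lemma:AH_id} first to the $X_L$ integral and then to the $X_R$ integral (with $\det\omega(x_{L,i},x_{R,j})$ sitting in the middle) collapses the integrand to the single determinant $\det_{1\le i,j\le N}\cvev{\mathsf a_i\mid\omega\mid\mathsf b_j}$, where the first $M$ of the $\mathsf a_i$ are $1/(z_\alpha-x_L)$ and the rest are $p_{b-1}$, and similarly $\mathsf b_j\in\{1/(w_\beta-x_R),\,q_{c-1}\}$. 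Its blocks are the $M\times M$ block $\Omega_{\alpha\beta}:=\cvev{1/(z_\alpha-x_L)\mid\omega\mid1/(w_\beta-x_R)}$, the Hilbert transforms $\widetilde p_{c-1}(z_\alpha)$ and $\widetilde q_{b-1}(w_\beta)$, and the $(N-M)\times(N-M)$ block $\cvev{p_{b-1}\mid\omega\mid q_{c-1}}=\mathsf N_{b-1,c-1}$. Taking $P_i,Q_j$ biorthogonal makes $\mathsf N$ diagonal, so the Schur complement of that block contributes $\prod_{k=0}^{N-M-1}h_k=Z_{N-M}$ and leaves $\det_{1\le\alpha,\beta\le M}\left(\Omega_{\alpha\beta}-\sum_{k=0}^{N-M-1}h_k^{-1}\widetilde p_k(z_\alpha)\widetilde q_k(w_\beta)\right)$. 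Finally, pairing the definitions of $\widetilde\phi_i,\widetilde\psi_i$ with the completeness relation $\sum_i\psi_i(x_R)\phi_i(x_L)=\tilde\omega(x_R,x_L)$ of the operator formalism gives $\Omega_{\alpha\beta}=\ee^{-V_L(z_\alpha)-V_R(w_\beta)}\sum_{i\ge0}\widetilde\psi_i(w_\beta)\widetilde\phi_i(z_\alpha)$, so the bracket equals $\ee^{-V_L(z_\alpha)-V_R(w_\beta)}\widetilde K_{N-M}(w_\beta,z_\alpha)$; pulling the exponentials out of the determinant gives~\eqref{eq:ch_poly_pair_inv1}.

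\textbf{Case $M\ge N$.} Here the same identity has the other shape, $\Delta_N(X_L)\prod_\alpha\det(z_\alpha-X_L)^{-1}=\pm\,\Delta_M(Z)^{-1}\det_{1\le\alpha,\beta\le M}\mathsf M^{(L)}_{\alpha\beta}$, an $M\times M$ determinant with $N$ Cauchy columns $1/(z_\alpha-x_{L,i})$ ($i=1,\dots,N$) and $M-N$ \emph{inert} columns $p_{a-1}(z_\alpha)$ ($a=1,\dots,M-N$), and likewise for $X_R,W,q$. The inert columns pass through the eigenvalue integral untouched, so Laplace-expanding each $M\times M$ determinant along its $N$ Cauchy columns and applying Lemma~\ref{lemma:AH_id} twice would express $\expval{\cdots}$ as $\pm\,Z_N^{-1}\Delta_M(Z)^{-1}\Delta_M(W)^{-1}$ times a sum over pairs of $N$-subsets $S,R\subseteq\{1,\dots,M\}$ of $\pm$ times [the $(M-N)\times(M-N)$ minor of $(p_{a-1}(z_\alpha))$ on the columns $\{1,\dots,M\}\setminus S$] $\times$ [the corresponding minor of $(q_{b-1}(w_\beta))$ on $\{1,\dots,M\}\setminus R$] $\times$ [the minor of $\Omega$ on rows $S$ and columns $R$]. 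This sum is precisely the generalized Laplace expansion of the $(2M-N)\times(2M-N)$ block determinant
\begin{equation*}
\det\begin{pmatrix}\Omega & B\\ C & 0\end{pmatrix},\qquad B_{\alpha a}=p_{a-1}(z_\alpha),\quad C_{b\beta}=q_{b-1}(w_\beta),
\end{equation*}
with $(M-N)\times(M-N)$ zero block. Evaluating it by the Schur complement of $\Omega$, $\det\begin{pmatrix}\Omega&B\\C&0\end{pmatrix}=(-1)^{M-N}\det\Omega\cdot\det(C\Omega^{-1}B)$; with $\widetilde\omega:=\Omega^{-1}$ the second factor is the $(M-N)\times(M-N)$ determinant built from $p_{a-1}(z_\alpha)$, $\widetilde\omega_{\alpha\beta}$ and $q_{b-1}(w_\beta)$ that appears in~\eqref{eq:ch_poly_pair_inv2}, and the overall sign is $(-1)^{M-N}$.

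\textbf{The main difficulty} is not conceptual but is the sign and prefactor bookkeeping: the $\pm$ in the mixed Cauchy--Vandermonde identity, the Laplace signs attached to $S,R$, and the $(-1)^{M-N}$ from the Schur complement have to be shown to combine into exactly the stated constants, one must check that the collected sum of minors really is $\det\begin{pmatrix}\Omega&B\\C&0\end{pmatrix}$ and not a reordering of it, and the precise index convention for $\widetilde\omega$ (i.e.\ $\Omega^{-1}$ versus its transpose) must be pinned down. One should also record the mixed Cauchy--Vandermonde identity itself --- classical, but not among the results quoted above --- and, as everywhere in the paper, the interchanges of the geometric series and spectral sums with the formal eigenvalue integrals are to be understood formally.
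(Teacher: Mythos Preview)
Your proposal is correct, and in fact both routes you describe lead to the result. What you put in the parenthetical remark --- expand each inverse characteristic polynomial by Lemma~\ref{lemma:Schur_expansion}, insert Lemma~\ref{lemma:Schur_average}, and resum the double partition sum via the discrete Andr\'eief--Heine identity together with $\sum_{r\ge0}z^{-r-1}x^{r}=(z-x)^{-1}$ --- is precisely the route the paper takes. Your \emph{primary} route, by contrast, bypasses the Schur expansion entirely: you write $\Delta_N(X_L)\prod_\alpha\det(z_\alpha-X_L)^{-1}$ as a single mixed Cauchy--Vandermonde determinant before integrating, and then apply the continuous AH identity directly. Both paths land on the same $N\times N$ (for $M\le N$) or $(2M-N)\times(2M-N)$ (for $M\ge N$) block determinant, and from there the Schur complement step and the identification of $\Omega_{\alpha\beta}$ with $\ee^{-V_L(z_\alpha)-V_R(w_\beta)}\sum_{i\ge0}\widetilde\psi_i(w_\beta)\widetilde\phi_i(z_\alpha)$ via the completeness relation are identical in both arguments.

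The trade-off: your approach is shorter and avoids the infinite partition sums and their resummation, at the cost of importing the mixed Cauchy--Vandermonde identity as an external ingredient (which, as you note, is classical but not among the paper's stated lemmas). The paper's approach stays self-contained within the Schur-polynomial machinery developed in Section~\ref{sec:Schur_av}--\ref{sec:ch_poly_inv_av}, making the present proposition a direct continuation of the earlier ones, but pays for this with the longer partition-sum manipulations. Your caveat about the sign and prefactor bookkeeping applies to both proofs; the paper handles it implicitly by tracking the Schur expansion through to the block determinant rather than via Laplace expansion along Cauchy columns.
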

\begin{proof}
We first consider the case $M \le N$.
In this case, applying the Schur polynomial expansion as before, we obtain
\begin{align}
    & \expval{ \prod_{\alpha = 1}^M \det(z_\alpha - X_L)^{-1} \det(w_\alpha - X_R)^{-1} }
    \nonumber \\
    & =    
    \frac{Z_N^{-1}}{\Delta_M(Z) \Delta_M(W)}
    \sum_{\ell(\lambda), \ell(\mu) \le M}
    \det_{1 \le \alpha, \beta \le M} z_\alpha^{- N - \lambda_\beta + \beta - 1}
    \det_{1 \le \alpha, \beta \le M} w_\alpha^{- N - \mu_\beta + \beta - 1}
    \nonumber \\
    & \hspace{8em}
    \times \det_{\substack{ 1 \le \alpha, \beta \le M \\ 1 \le i, j \le N - M}}
    \begin{pmatrix}
    (x_L^{\lambda_\alpha + N - \alpha} \mid \omega \mid x_R^{\mu_\beta + N - \beta} ) & (x_L^{\lambda_\alpha + N - \alpha} \mid \omega \mid x_R^{N - M - j} ) \\
    (x_L^{N - M - i} \mid \omega \mid x_R^{\mu_\beta + N - \beta} ) & (x_L^{N - M - i} \mid \omega \mid x_R^{N - M - j} )
    \end{pmatrix}
    \nonumber \\
    & =    
    \frac{Z_N^{-1}}{\Delta_M(Z) \Delta_M(W)}
    \sum_{\ell(\lambda), \ell(\mu) \le M}
    \det_{1 \le \alpha, \beta \le M} z_\alpha^{- N - \lambda_\beta + \beta - 1}
    \det_{1 \le \alpha, \beta \le M} w_\alpha^{- N - \mu_\beta + \beta - 1}
    \nonumber \\
    & \hspace{8em}
    \times \det_{\substack{ 1 \le \alpha, \beta \le M \\ 1 \le i, j \le N - M}}
    \begin{pmatrix}
    (x_L^{\lambda_\alpha + N - \alpha} \mid \omega \mid x_R^{\mu_\beta + N - \beta} ) & (x_L^{\lambda_\alpha + N - \alpha} \mid \omega \mid q_{N - M - j} ) \\
    (p_{N - M - i} \mid \omega \mid x_R^{\mu_\beta + N - \beta} ) & (p_{N - M - i} \mid \omega \mid q_{N - M - j} )
    \end{pmatrix}   
    \nonumber \\
    & =    
    \frac{Z_{N-M} / Z_N}{\Delta_M(Z) \Delta_M(W)}
    \sum_{\ell(\lambda), \ell(\mu) \le M}
    \det_{1 \le \alpha, \beta \le M} z_\alpha^{- N - \lambda_\beta + \beta - 1}
    \det_{1 \le \alpha, \beta \le M} w_\alpha^{- N - \mu_\beta + \beta - 1}
    \nonumber \\
    & \quad 
    \times \det_{ 1 \le \alpha, \beta \le M}
    \qty(
    (x_L^{\lambda_\alpha + N - \alpha} \mid \omega \mid x_R^{\mu_\beta + N - \beta} ) - \sum_{i,j=0}^{N - M - 1} 
    (x_L^{\lambda_\alpha + N - \alpha} \mid \omega \mid q_{i} ) (\mathsf{N}^{-1})_{i,j} (p_{j} \mid \omega \mid x_R^{\mu_\beta + N - \beta} )
    )     
    \, .
\end{align}
We remark that each element in the determinant is given by
\begin{align}
    &
    (x_L^{\lambda_\alpha + N - \alpha} \mid \omega \mid x_R^{\mu_\beta + N - \beta} ) - \sum_{i,j=0}^{N - M - 1} 
    (x_L^{\lambda_\alpha + N - \alpha} \mid \omega \mid q_{i} ) (\mathsf{N}^{-1})_{i,j} (p_{j} \mid \omega \mid x_R^{\mu_\beta + N - \beta} )
    \nonumber \\
    & = 
    (x_L^{\lambda_\alpha + N - \alpha} \mid \omega \mid x_R^{\mu_\beta + N - \beta} )
    \nonumber \\
    & \qquad 
    - \int \prod_{k=L,R,L',R'} \dd{x}_k \ee^{-V_k(x_k)} x_L^{\lambda_\alpha + N - \alpha} \omega(x_L,x_{R'}) \sum_{i,j=0}^{N-M-1} q_{i}(x_{R'}) (\mathsf{N}^{-1})_{i,j} p_{j}(x_{L'}) \omega(x_{L'},x_R) x_L^{\mu_\beta + N - \beta}
    \nonumber \\
    & = 
    (x_L^{\lambda_\alpha + N - \alpha} \mid \omega \mid x_R^{\mu_\beta + N - \beta} )
    \nonumber \\
    & \qquad 
    - \int \prod_{k=L,R} \dd{x}_k \dd{x}_{k'} \ee^{-V_k(x_k)} x_L^{\lambda_\alpha + N - \alpha} \omega(x_L,x_{R'}) K_{N-M}(x_{R'},x_{L'}) \omega(x_{L'},x_R) x_R^{\mu_\beta + N - \beta}    
    \nonumber \\
    & = 
    \int \prod_{k=L,R} \dd{x}_k \dd{x}_{k'} \ee^{-V_k(x_k)} x_L^{\lambda_\alpha + N - \alpha} \omega(x_L,x_{R'}) \qty( \widetilde{\omega}(x_{R'},x_{L'}) - K_{N-M}(x_{R'},x_{L'}) ) \omega(x_{L'},x_{R}) x_R^{\mu_\beta + N - \beta}
    \nonumber \\
    & = 
    \int \prod_{k=L,R} \dd{x}_k \dd{x}_{k'} \ee^{-V_k(x_k)} x_L^{\lambda_\alpha + N - \alpha} \omega(x_L,x_{R'}) \qty( \sum_{k=N-M}^\infty \psi_k(x_{R'}) \phi_k(x_{L'})  ) \omega(x_{L'},x_R) x_R^{\mu_\beta + N - \beta}
    \, .
\end{align}
Therefore, we obtain
\begin{align}
    & \expval{ \prod_{\alpha = 1}^M \det(z_\alpha - X_L)^{-1} \det(w_\alpha - X_R)^{-1} }
    \nonumber \\
    & =    
    \frac{Z_{N-M} / Z_N}{\Delta_M(Z) \Delta_M(W)}
    \det_{1 \le \alpha, \beta \le M}
    \qty(
    \sum_{i=N-M}^\infty
    \int \prod_{k=L,R} \dd{x}_k \dd{x}_{k'} \ee^{-V_k(x_k)}  
    \frac{\omega(x_L,x_{R'}) \psi_i(x_{R'}) \phi_i(x_{L'}) \omega(x_{L'},x_R) }{(z_\alpha - x_{L}) (w_\beta - x_{R})}
    )
    \nonumber \\
    & =    
    \frac{Z_{N-M} / Z_N}{\Delta_M(Z) \Delta_M(W)} 
    \det_{1 \le \alpha, \beta \le M}
    \qty( \ee^{-V_L(z_\alpha)}\ee^{-V_R(w_\beta)}
    \sum_{i=N-M}^\infty \widetilde{\phi}_i(z_\alpha) \widetilde{\psi}_i(w_\beta)
    )    
    \nonumber \\
    & =    
    \frac{Z_{N-M} / Z_N}{\Delta_M(Z) \Delta_M(W)} \ee^{-\operatorname{tr}V_L(Z)}\ee^{-\operatorname{tr}V_R(W)}
    \det_{1 \le \alpha, \beta \le M}
    \qty( \widetilde{K}_{N-M}(w_\beta,z_\alpha) ) 
    \, .
\end{align}
This completes the derivation of the formula~\eqref{eq:ch_poly_pair_inv1}. \\
\noindent
We then consider the case $M \ge N$.
In this case, we similarly obtain the formula~\eqref{eq:ch_poly_pair_inv2} as follows,
\begin{align}
    & \expval{ \prod_{\alpha = 1}^M \det(z_\alpha - X_L)^{-1} \det(w_\alpha - X_R)^{-1} }
    \nonumber \\
    & =    
    \frac{Z_N^{-1}}{\Delta_M(Z) \Delta_M(W)}
    \nonumber \\ & \times
    \sum_{\substack{0 \le \lambda_N \le \cdots \le \lambda_1 \le \infty \\
    0 \le \mu_N \le \cdots \le \mu_1 \le \infty}}
    \det_{\substack{i=1,\ldots,N \\ \alpha = 1,\ldots,M \\ a = 1,\ldots,M-N}}
    \begin{pmatrix}
    z_\alpha^{-\lambda_i + i - (N+1)} \\ p_{a-1}(z_\alpha)
    \end{pmatrix}
    \det_{1 \le i,j \le N} ( x_L^{\lambda_i + N - i} \mid \omega \mid x_R^{\mu_j + N - j}) 
    \det_{\substack{j=1,\ldots,N \\ \beta = 1,\ldots,M \\ b = 1,\ldots,M-N}}
    \begin{pmatrix}
    w_\beta^{- \mu_j + j - (N+1)} \\ q_{b-1}(w_\beta)
    \end{pmatrix} \nonumber \\
    & = \frac{Z_N^{-1}}{\Delta_M(Z) \Delta_M(W)} \frac{1}{N!^2}
    \sum_{\substack{0 \leq r_1, \cdots, r_N \leq \infty \\ 0 \leq s_1, \cdots, s_N \leq \infty \\ r_i \neq r_j , s_i \neq s_j}}
    \det_{\substack{i=1,\ldots,N \\ \alpha = 1,\ldots,M \\ a = 1,\ldots,M-N}}
    \begin{pmatrix}
    z_\alpha^{- r_i -1} \\ p_{a-1}(z_\alpha)
    \end{pmatrix} 
    \det_{1 \le i,j \le N }
    (x_L^{r_i} \mid \omega \mid x_R^{s_j}) 
    \det_{\substack{j=1,\ldots,N \\ \beta = 1,\ldots,M \\ b = 1,\ldots,M-N}}
    \begin{pmatrix}
    w_\beta^{- s_j - 1} \\ q_{b-1}(w_\beta)
    \end{pmatrix} \nonumber \\
    & = \frac{Z_N^{-1}}{\Delta_M(Z) \Delta_M(W)} \det_{\substack{1 \leq \alpha,\beta \leq M \\ 1 \le a, b \le M-N }}
    \begin{pmatrix}
    \displaystyle \sum_{r,s = 0}^\infty z_{\alpha}^{-r-1}w_{\beta}^{-s-1} (x_L^{r} \mid \omega \mid x_R^{s}) & q_{b-1}(w_\beta) \\
    p_{a-1}(z_\alpha) & 0
    \end{pmatrix}
    \nonumber \\
    & = \frac{Z_N^{-1}}{\Delta_M(Z) \Delta_M(W)} \det_{\substack{1 \leq \alpha,\beta \leq M \\ 1 \le a, b \le M-N }}
    \begin{pmatrix}
    \qty( \frac{1}{z_\alpha - x_L} \mid \omega \mid \frac{1}{w_\beta - x_R}) & q_{b-1}(w_\beta) \\
    p_{a-1}(z_\alpha) & 0
    \end{pmatrix}
    \nonumber \\
    & = \frac{(-1)^{M-N} Z_N^{-1}}{\Delta_M(Z) \Delta_M(W)}
    \det_{1 \le \alpha, \beta \le M} \qty( \frac{1}{z_\alpha - x_L} \mid \omega \mid \frac{1}{w_\beta - x_R})
    \det_{1 \le a, b \le M - N}
    \qty( \sum_{\alpha,\beta = 1}^M p_{a-1}(z_\alpha) \widetilde{\omega}_{\alpha,\beta} q_{b-1}(w_\beta)  )
\end{align}
This completes the proof.
\end{proof}

\subsection{Mixed pair correlation}

We consider the mixed-type pair correlation function of the characteristic polynomials.
\begin{proposition}
Let $Z = \diag(z_1,\ldots,z_M)$ and $W = \diag(w_1,\ldots,w_M)$.
The following determinantal formulas hold for the mixed-pair correlation for $M \le N$.
\begin{subequations}
\begin{align}
    \expval{ \prod_{\alpha = 1}^M \det(z_\alpha - X_L) \det(w_\alpha - X_R)^{-1} }
    & = 
    \frac{Z_{N-M}/Z_N}{\Delta_M(Z) \Delta_M(W)}
    \det_{\substack{\alpha = 1, \ldots, M \\ \beta = 1,\ldots, 2M}}
    \begin{pmatrix}
    P_{N+M-\beta}(z_\alpha) \\
    \widetilde{Q}_{N+M-\beta}(w_\alpha) 
    \end{pmatrix}
    \, , \label{eq:ch_poly_av_mix1} \\
    \expval{ \prod_{\alpha = 1}^M \det(z_\alpha - X_L)^{-1} \det(w_\alpha - X_R) }
    & = 
    \frac{Z_{N-M}/Z_N}{\Delta_M(Z) \Delta_M(W)}
    \det_{\substack{\alpha = 1, \ldots, M \\ \beta = 1,\ldots, 2M}}
    \begin{pmatrix}
    \widetilde{P}_{N+M-\beta}(z_\alpha) \\
    Q_{N+M-\beta}(w_\alpha) 
    \end{pmatrix}
    \, . \label{eq:ch_poly_av_mix2} 
\end{align}
\end{subequations}
\end{proposition}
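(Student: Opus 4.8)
The plan is to carry out, inside one determinant, the two arguments already developed for the one-sided correlators: the Laplace (dual Cauchy) expansion from the proof of~\eqref{eq:ch_poly_av_L} for the factor $\prod_{\alpha}\det(z_\alpha-X_L)$, and the discrete Andr\'eief summation from the $M\le N$ case of~\eqref{eq:ch_poly_av_inv_R} for the factor $\prod_{\alpha}\det(w_\alpha-X_R)^{-1}$. First I would expand both factors with Lemma~\ref{lemma:Schur_expansion}: since $\min(M,N)=M$, this gives $\prod_\alpha\det(z_\alpha-X_L)=\sum_{\lambda\subseteq(M^N)}(-1)^{|\lambda|}s_{\lambda^\vee}(Z)\,s_\lambda(X_L)$ and $\prod_\alpha\det(w_\alpha-X_R)^{-1}=\det_M W^{-N}\sum_{\ell(\mu)\le M}s_\mu(W^{-1})\,s_\mu(X_R)$. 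Applying Lemma~\ref{lemma:Schur_average} to each coefficient, $\expval{s_\lambda(X_L)s_\mu(X_R)}=Z_N^{-1}\det_{1\le i,j\le N}\cvev{x_L^{\lambda_i+N-i}\mid\omega\mid x_R^{\mu_j+N-j}}$, turns the left-hand side into a double sum over $\lambda$ and $\mu$ of a product of three determinants, with overall prefactor $Z_N^{-1}/(\Delta_M(Z)\Delta_M(W))$.

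Next I would resolve the $\lambda$-sum exactly as in the proof of~\eqref{eq:ch_poly_av_L}. Writing $s_{\lambda^\vee}(Z)$ as a determinant of monomials over $\Delta_M(Z)$ and using that $\{\lambda_i+N-i\}_{i=1}^{N}$ and $\{\lambda_\beta^\vee+M-\beta\}_{\beta=1}^{M}$ are complementary subsets of $\{0,\dots,N+M-1\}$ with relative sign $(-1)^{|\lambda|}$, the finite sum over $\lambda\subseteq(M^N)$ is a Laplace expansion that assembles the $z$-monomials and the $N\times N$ bracket matrix into a single $(N+M)\times(N+M)$ determinant whose columns carry all degrees $0,\dots,N+M-1$: $M$ of its rows are $p_e(z_\alpha)$ and the remaining $N$ are $\cvev{p_e\mid\omega\mid x_R^{\mu_j+N-j}}$ for $j=1,\dots,N$, and column operations replace $p_e$ by the biorthogonal $P_e$. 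Because $\ell(\mu)\le M$, the $N-M$ rows with $j=M+1,\dots,N$ carry only the fixed monomials $x_R^{N-j}$ of degree $\le N-M-1$; by biorthogonality $\cvev{P_e\mid\omega\mid x_R^{N-j}}$ vanishes for $e>N-j$ and equals $h_{N-j}$ for $e=N-j$, so they form a triangular block supported on the columns of degree $0,\dots,N-M-1$. Cofactor-expanding along them removes exactly those columns, contributes $\prod_{k=0}^{N-M-1}h_k=Z_{N-M}$, and leaves a $2M\times 2M$ determinant with columns of degree $N+M-\beta$, $\beta=1,\dots,2M$: the first $M$ rows are $P_{N+M-\beta}(z_\alpha)$ and the last $M$ rows are $\cvev{P_{N+M-\beta}\mid\omega\mid x_R^{\mu_j+N-j}}$, $j=1,\dots,M$.

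Finally I would resolve the remaining sum over $\mu$ with $\ell(\mu)\le M$ as in the $M\le N$ case of~\eqref{eq:ch_poly_av_inv_R}: rewrite $\det_M W^{-N}s_\mu(W^{-1})$ as a $w$-monomial determinant over $\Delta_M(W)$, Laplace-expand the $2M$-determinant along its $M$ $\mu$-dependent rows, and apply the discrete Andr\'eief identity to the resulting rank-$M$ block (unordering the parts of $\mu$ into non-colliding summation variables with the compensating $1/M!$). Using $\sum_{r\ge0}w^{-r-1}x^r=(w-x)^{-1}$ together with the vanishing $\cvev{P_e\mid\omega\mid x_R^{t}}=0$ valid for $e\ge N-M>t$ then identifies $\sum_{s\ge0}w^{-s-(N-M+1)}\cvev{P_e\mid\omega\mid x_R^{s+N-M}}=\widetilde Q_e(w)$, so reassembling the Laplace expansion produces the $2M\times 2M$ determinant of~\eqref{eq:ch_poly_av_mix1}; collecting the prefactors $Z_N^{-1}$, $Z_{N-M}$ and $1/(\Delta_M(Z)\Delta_M(W))$ finishes it. Formula~\eqref{eq:ch_poly_av_mix2} follows by the symmetric computation exchanging $L\leftrightarrow R$ and $Z\leftrightarrow W$.

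The main obstacle I anticipate is the index bookkeeping: tracking which degrees/columns are consumed by each nested cofactor expansion, and arranging the two Laplace expansions — in $Z$ first, in $W$ last — so that the two blocks end up indexed by the common $\beta=1,\dots,2M$ with degree $N+M-\beta$. One must also keep the signs of the nested expansions and of the $\Delta_M(W^{-1})$-versus-$\Delta_M(W)$ rewriting consistent, and check the formal interchange of the infinite $\mu$-summation with the determinant manipulations. The one step that is not a direct transcription of the one-sided proofs is the identification in the last paragraph: it is precisely the biorthogonal vanishing for $e\ge N-M$ that makes the surviving $w$-block the full Hilbert transform $\widetilde Q$ rather than a truncated one.
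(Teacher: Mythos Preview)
Your proposal is correct and follows essentially the same route as the paper: the dual Cauchy/Laplace expansion absorbs the $z$-factor into an $(N+M)\times(N+M)$ determinant, the discrete Andr\'eief summation turns the $w$-factor into Hilbert transforms, and biorthogonality collapses the size to $2M\times 2M$ with the factor $Z_{N-M}$. The only cosmetic difference is the order: you perform the biorthogonal reduction \emph{before} the $w$-sum and then use $\cvev{P_e\mid\omega\mid x_R^t}=0$ for $t<N-M\le e$ to extend the truncated geometric series to the full $\widetilde Q_e$, whereas the paper does the $w$-sum first and supplies the missing low-degree terms by row operations against the still-present rows $(p_{N+M-i}\mid\omega\mid x_R^{s})$, $0\le s\le N-M-1$, before reducing; these are two phrasings of the same mechanism.
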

\begin{proof}
Applying the Schur polynomial expansion and the co-factor expansion as before, we obtain the following,
\begin{align}
    & \expval{ \prod_{\alpha = 1}^M \det(z_\alpha - X_L) \det(w_\alpha - X_R)^{-1} }
    \nonumber\\
    & = 
    \frac{Z_N^{-1}}{\Delta_M(Z) \Delta_M(W)}
    \sum_{\ell(\lambda) \le M} 
    \det_{1 \le \alpha, \beta \le M} \qty(w_\alpha^{-\lambda_\beta + \beta - N - 1}) 
    \det_{\substack{i=1,\ldots,N+M \\ \alpha,\beta = 1,\ldots,M \\ k = M+1,\ldots,N}}
    \begin{pmatrix}
    z_\alpha^{N+M-i} \\
    (x_L^{N+M-i} \mid \omega \mid x_R^{\lambda_\beta+N-\beta}) \\
    (x_L^{N+M-i} \mid \omega \mid x_R^{N-k}) 
    \end{pmatrix}
    \nonumber\\
    & = 
    \frac{Z_N^{-1}}{\Delta_M(Z) \Delta_M(W)}
    \sum_{\ell(\lambda) \le M} 
    \det_{1 \le \alpha, \beta \le M} \qty(w_\alpha^{-\lambda_\beta + \beta - N - 1}) 
    \det_{\substack{i=1,\ldots,N+M \\ \alpha,\beta = 1,\ldots,M \\ k = M+1,\ldots,N}}
    \begin{pmatrix}
    p_{N+M-i}(z_\alpha) \\
    (p_{N+M-i} \mid \omega \mid x_R^{\lambda_\beta+N-\beta}) \\
    (p_{N+M-i} \mid \omega \mid x_R^{N-k}) 
    \end{pmatrix}    
    \nonumber\\
    & = 
    \frac{Z_N^{-1}}{\Delta_M(Z) \Delta_M(W)}
    \det_{\substack{i=1,\ldots,N+M \\ \alpha,\beta = 1,\ldots,M \\ k = M+1,\ldots,N}}
    \begin{pmatrix}
    p_{N+M-i}(z_\alpha) \\
    \widetilde{q}_{N+M-i}(w_\beta) \\
    (p_{N+M-i} \mid \omega \mid x_R^{N-k}) 
    \end{pmatrix}        
    \, .
\end{align}
Then, the determinant part is given by
\begin{align}
    & \det_{\substack{i=1,\ldots,N+M \\ \alpha,\beta = 1,\ldots,M \\ k = M+1,\ldots,N}}
    \begin{pmatrix}
    p_{N+M-i}(z_\alpha) \\
    \widetilde{q}_{N+M-i}(w_\beta) \\
    (p_{N+M-i} \mid \omega \mid x_R^{N-k}) 
    \end{pmatrix}  
    \nonumber \\
    & = \det_{\substack{\alpha,\beta, \gamma, \delta = 1,\ldots,M \\ k,l = 1,\ldots,N-M}}
    \begin{pmatrix}
    p_{N+M-\gamma}(z_\alpha) & p_{N-\delta}(z_\alpha) & p_{N-M-l}(z_\alpha) \\
    \widetilde{q}_{N+M-\gamma}(w_\beta) & \widetilde{q}_{N-\delta}(w_\beta) & \widetilde{q}_{N-M-l}(w_\beta) \\
    (p_{N+M-\gamma} \mid \omega \mid q_{N-M-k}) & (p_{N-\delta} \mid \omega \mid q_{N-M-k}) & (p_{N-M-l} \mid \omega \mid q_{N-M-k}) 
    \end{pmatrix}
    \nonumber \\
    & = \det_{\substack{\alpha,\beta, \gamma, \delta = 1,\ldots,M \\ k,l = 1,\ldots,N-M}}
    \begin{pmatrix}
    P_{N+M-\gamma}(z_\alpha) & P_{N-\delta}(z_\alpha) & P_{N-M-l}(z_\alpha) \\
    \widetilde{Q}_{N+M-\gamma}(w_\beta) & \widetilde{Q}_{N-\delta}(w_\beta) & \widetilde{Q}_{N-M-l}(w_\beta) \\
    0 & 0 & h_{N-M-l} \, \delta_{N-M-l,N-M-k}
    \end{pmatrix}    
    \nonumber \\
    & = Z_{N-M}
    \det_{\alpha, \beta, \gamma, \delta = 1,\ldots,M}
    \begin{pmatrix}
    P_{N+M-\gamma}(z_\alpha) & P_{N-\delta}(z_\alpha)  \\
    \widetilde{Q}_{N+M-\gamma}(w_\beta) & \widetilde{Q}_{N-\delta}(w_\beta)
    \end{pmatrix}
    \, .
\end{align}
This completes the derivation of \eqref{eq:ch_poly_av_mix1}.
The other formula \eqref{eq:ch_poly_av_mix2} can be also derived in the same way.
\end{proof}

\begin{remark}
For $M = 1$, the mixed-pair correlation functions are given by
\begin{subequations}
\begin{align}
    \expval{ \frac{\det(z - X_L)}{\det(w - X_R)} } 
    & = \frac{Z_{N-1}}{Z_N} \det 
    \begin{pmatrix}
    P_{N} (z) & P_{N-1} (z) \\ \widetilde{Q}_N (w) & \widetilde{Q}_{N-1}(w)
    \end{pmatrix}
    \nonumber \\
    & = \frac{1}{h_{N-1}} 
    \left( P_N(z) \widetilde{Q}_{N-1}(w) - P_{N-1}(z) \widetilde{Q}_N(w) \right)
    \, , \\
    \expval{ \frac{\det(w - X_R)}{\det(z - X_L)} } 
    & = \frac{Z_{N-1}}{Z_N} \det 
    \begin{pmatrix}
    \widetilde{P}_{N} (z) & \widetilde{P}_{N-1} (z) \\ {Q}_N (w) & {Q}_{N-1}(w)
    \end{pmatrix}
    \nonumber \\
    & = \frac{1}{h_{N-1}} 
    \left( \widetilde{P}_N(z) {Q}_{N-1}(w) - \widetilde{P}_{N-1}(z) \widetilde{Q}_N(w) \right)
    \, .    
\end{align}
\end{subequations}
These expressions suggest that the mixed-pair correlation could be also written in terms of the associated CD kernel.
See~\cite{Strahov:2002zu,Baik:2003JMP,Borodin:2006CPAM,Eynard:2015aea} for details.
We leave this issue for the future study.
\end{remark}

\newpage

\bibliographystyle{amsalpha_mod}
\bibliography{ref}

\end{document}